\newcommand{\norm}[1]{\left\lVert#1\right\rVert}
\newcommand{\R}{\mathbb{R}}
\newcommand{\C}{\mathbb{C}}
\newcommand{\N}{\mathbb{N}}
\renewcommand{\Re}{\operatorname{Re}}
\renewcommand{\Im}{\operatorname{Im}}
\newtheorem{theorem}{Theorem}[section]
\newtheorem{lemma}[theorem]{Lemma}
\newtheorem{remark}[theorem]{Remark}
\newtheorem{notation}[theorem]{Notation}
\newtheorem{definition}[theorem]{Definition}
\numberwithin{equation}{section}
\date{\today}
\title{\Large{\textsc{One-Boson Scattering Processes in the Massless Spin-Boson Model -- A Non-Perturbative Formula}}}
\author{Miguel Ballesteros\thanks{\texttt{miguel.ballesteros@iimas.unam.mx},
    Instituto de Investigaciones en Matem\'aticas Aplicadas y en Sistemas,
    Universidad Nacional Aut\'anoma de M\'exico}, Dirk-Andr\'e
    Deckert\thanks{\texttt{deckert@math.lmu.de}, Mathematisches Institut
    der Ludwig-Maximilians-Universität München}, Felix
    H\"anle\thanks{\texttt{haenle@math.lmu.de}, Mathematisches Institut
der Ludwig-Maximilians-Universität München}}
\begin{document}
\maketitle

\begin{abstract}
In scattering experiments, physicists observe so-called resonances as peaks at certain energy values in the measured scattering cross sections per solid angle. These peaks are usually associate with certain scattering processes, e.g., emission, absorption, or excitation of certain particles and systems. On the other hand, mathematicians define resonances as poles of an analytic continuation of the resolvent operator through complex dilations. A major challenge is to relate these scattering and resonance theoretical notions, e.g., to prove that the poles of the resolvent operator induce the above mentioned peaks in the scattering matrix. In the case of quantum mechanics, this problem was addressed in numerous works that culminated in Simon's seminal paper \cite{simonnbody} in which a general solution was presented for a large class of pair potentials. However, in quantum field theory the analogous problem has been open for several decades despite the fact that scattering and resonance theories have been well-developed for many models. In certain regimes these models describe  very fundamental phenomena, such as emission and absorption of photons by atoms, from which quantum mechanics originated. In this work we present  a first non-perturbative formula that relates the scattering matrix to the resolvent operator in the massless Spin-Boson model. This result can be seen as a major progress compared to our previous works \cite{bdh-scat} and \cite{bdfh} in which we only managed to derive a perturbative formula.
\end{abstract}

  \paragraph{Keywords:} Scattering Theory; Resonance Theory; Spin-Boson Model; Multiscale Analysis

\section{Introduction}
\label{sec:introduction}

In this work we analyze the massless Spin-Boson model which describes  a two-level atom  interacting with a second-quantized massless scalar field.  We  derive a non-perturbative expression of the scattering matrix in terms of the resolvent operator for one-boson processes, and thus, prove an analogous result that was obtained by Simon in \cite{simonnbody} for the N-body Schr\"odinger operator in this particular model of quantum field theory. More precisely, we show that the pole of a meromorphic continuation of the integral kernel of the scattering matrix is located precisely at the resonance energy. The objective in this result is to contribute to the understanding of the relation between resonance and scattering theory. In our previous works \cite{bdh-scat} and \cite{bdfh}, we were already able to derive perturbative results of this kind in case of the massless and massive Spin-Boson models, respectively. However, both results are only given in leading order with respect to the coupling constant. The present work can be seen as a major improvement of these pertubative results because it provides a closed and non-perturbative formula that connects the integral kernel of the scattering matrix elements for one-boson processes in terms of the dilated resolvent.

Our results are based on the well-established fields of scattering and resonance theories and the numerous works in the classical literature of which we want to give a short overview here. Resonance theory, in the realm of quantum field theory, has been developed in a variety of models; see, e.g., \cite{bfs1,bfs2,bfs3,bcfs,bfs100,bbf,fgs100,feshbach,s,f,bffs,pizzo1,pizzo2,bach,bbp,bdh-res}. In these works, several techniques have been invented for  massless models of quantum field theory  in order to cope with the absence of a  spectral gap. Scattering theory has also been developed  in various models of quantum field theory (see, e.g.,
\cite{fau1,fau2,fau3,fgs1,fgs2}) and in particular in the massless Spin-Boson model  (see, e.g., \cite{rgk,rk,rk2,derezinski,bkz,bdh-scat,bdfh}).  In \cite{bfp}, a rigorous  mathematical justification of   Bohr’s frequency condition  was derived
using  an expansion of the scattering amplitudes with respect to powers of the fine structure constant for the Pauli-Fierz model.  In \cite{bkz}, the photoelectric effect has been studied for a model of an atom with a single bound state,  coupled to the quantized electromagnetic  field. A related problem is studying the time-evolution in models of quantum field theory. In \cite{bmw}, this question has been addressed for the Spin-Boson model.
A good overview has been given in \cite{spohn_dynamics_2008}.

This work heavily relies on the multiscale analysis carried out in \cite{bdh-res} as well as  on the results in \cite{bdh-scat}.  We de not repeat any of those proofs here but rather focus on the core argument to derive the above mentioned non-perturbative formula. However, throughout this work, we give precise references to any of the utilized theorems and lemmas which also contain all technical details.

\subsection{The Spin-Boson model}
\label{sec:defmodel}
In  this section  we introduce the considered model and
 state preliminary definitions, well-known tools and facts from which we
start our analysis. If the reader is already familiar with the introductory
Sections 1.1 until 1.3 of \cite{bdh-res}, these subsections can be skipped.

The non-interacting Spin-Boson Hamiltonian is defined as
\begin{align}
\label{h0def}
H_0:=K + H_f , \qquad K:= \begin{pmatrix}
e_1 & 0 \\
0 & e_0
\end{pmatrix} ,
\qquad
H_f:=\int \mathrm{d^3}k \, \omega(k) a(k)^* a(k).
\end{align}
We regard $K$  as an idealized free Hamiltonian of a two-level atom. As already
stated in the introduction, its two energy levels are denoted by the real
numbers $0 = e_0 <e_1$. Furthermore, $H_f$ denotes the free
Hamiltonian of a massless scalar field having dispersion relation
$\omega(k)=|k|$, and $a,a^*$ are the annihilation and creation operators on the
standard Fock space.
For a precise defintion we refer to \cite[Section 1.1]{bdh-scat}.
Below, we
sometimes call $K$ the atomic part, and $H_f$ the free field part of the
Hamiltonian.  The sum of the free two-level atom Hamiltonian $K$ and the free
field Hamiltonian $H_f$   is named ``free Hamiltonian''
$H_0$.  The interaction term reads
\begin{align}
\label{interaction}
V:= \sigma_1\otimes \left( a(f) + a(f)^*\right) , \qquad   \sigma_1:= \begin{pmatrix}
0 & 1 \\
1 & 0
\end{pmatrix} ,
\end{align}
where the boson form factor  is given by
\begin{align}
f: \R^3 \setminus \{0\}\to \R , \qquad k\mapsto e^{-\frac{k^2}{\Lambda^2}}|k|^{-\frac{1}{2}+\mu} .
\label{eq:f}
\end{align}
In our case, the gaussian factor in \eqref{eq:f} acts as an ultraviolet cut-off for $\Lambda>0$ being the ultraviolet cut-off parameter and in addition the fixed number
\begin{align}
\label{const:mu}
\mu\in (0,1/2)
\end{align}
 yields a regularization of the infrared singularity at $k=0$ which is a
 technical assumption chosen such that we can apply  the results obtained in  \cite{bdh-res}.
 Note that the form factor $f$ only depends on the
 radial part of $k$. To emphasize this,  we often write $f(k)\equiv f(|k|)$.

The full Spin-Boson Hamiltonian is then defined as
\begin{align}
\label{eq:H}
H:= H_0 + g V
\end{align}
for some  coupling constant $g>0$ on the
Hilbert space
\begin{align}
\mathcal H := \mathcal K \otimes \mathcal F\left[ \mathfrak{h}\right] , \qquad
\mathcal K:= \C^2,
\end{align}
where
\begin{align}
\mathcal F\left[ \mathfrak{h}\right] :=  \bigoplus^\infty_{n=0} \mathcal F_n\left[ \mathfrak{h}\right]
,\qquad
\mathcal F_n\left[ \mathfrak{h}\right] :=
\mathfrak{h}^{\odot n},\qquad
\mathfrak h:= L^2(\mathbb R^3,\C)
\end{align}
denotes the standard bosonic Fock space, and superscript $\odot n$ denotes the
n-th symmetric tensor product,  where by convention $\mathfrak{h}^{\odot 0}\equiv
\C$. Note that we identify $K\equiv K\otimes 1_{\mathcal F[\mathfrak h]}$ and
$H_f\equiv 1_{\mathcal K}\otimes H_f$ in our notation (see Notation \ref{R} below).

An element $\Psi \in
\mathcal F[\mathfrak{h}]$
can be represented as a sequence $(\psi^{ (n)})_{n\in\N_0}$ of wave functions $\psi^{ (n)} \in \mathfrak{h}^{\odot n}$. The state $\Psi$ with $\psi^{ (0)}=1$ and $\psi^{ (n)}=0$ for all $n\geq 1$ is called the vacuum and is denoted by
\begin{align}
\label{Omega}
\Omega:=(1,0,0,\dots)\in \mathcal F\left[ \mathfrak{h}\right] .
\end{align}
Note that $a$ and $a^*$ fulfill the canonical commutation
relations:
\begin{align}
\label{eq:ccr}
  \forall h,l\in\mathfrak{h}, \qquad \left[a(h),a^*(l)   \right]=\left\langle h, l\right\rangle_2 , \qquad \left[a (h),a(l)   \right]=0 , \qquad \left[a^*(h),a^*(l)   \right]=0.
\end{align}
Let us recall some well-known facts about the introduced model.
It is well-known that $K,H_f,H_0,H$ are self-adjoint and bounded below on the domains $\mathcal K,\mathcal D(H_f),\mathcal D(H_0),\mathcal D(H)$, respectively (see, e.g., \cite[Proposition 1.1]{bdh-scat}).  The spectrum of $K$ consists of two eigenvalues $e_0$ and $e_1$ and the corresponding eigenvectors are
\begin{align}
\label{varphi}
\varphi_0= \left( 0,1 \right)^T \qquad \text{and} \qquad  \varphi_1= \left(1,0\right)^T \qquad \text{with} \qquad K \varphi_i =e_i \varphi_i , \quad i=0,1.
\end{align}
The spectrum of $H_f$ is  $\sigma (H_f)= [0, \infty )$ and it
 is absolutely continuous (see \cite{reedsimon2}). Consequently, the spectrum of $H_0$ is given by
$\sigma (H_0)=  [e_0, \infty )$, and $e_0,e_1$ are eigenvalues embedded in the absolutely continuous part of the spectrum of $H_0$ (see \cite{reedsimon1}).
\begin{notation}
\label{R}
    In  this  work we omit
    spelling out identity operators whenever unambiguous.
    For every vector spaces $V_1$,  $V_2$ and
    operators $ A_1 $ and $A_2$ defined on $V_1$ and $V_2$, respectively, we
    identify \begin{equation}\label{iden} A_1 \equiv A_1 \otimes \mathbbm
        1_{V_2}, \hspace{2cm}  A_2  \equiv \mathbbm 1_{V_1} \otimes A_2 .
    \end{equation}
    In order to simplify our notation further, and whenever
    unambiguous, we do not utilize specific  notations for every inner product
    or norm that we  employ.
\end{notation}

\subsection{Complex dilation}
\label{sec:dil}
In this section we shortly introduce the method of complex dilation which is a key tool for proving our main result. For a more detailed presentation we refer to \cite[Section 1.2]{bdh-scat}. We start by defining a family of
unitary operators on $\mathcal H$ indexed by $\theta \in\R$.
\begin{definition}
\label{def:dil}
For $\theta \in \mathbb R$, we define the unitary
    transformation
\begin{align}
u_\theta: \mathfrak{h}&\to \mathfrak{h}
, \qquad \psi(k) \mapsto e^{-\frac{3\theta}{2}} \psi(e^{-\theta}k) .
\end{align}
Similarly, we define its canonical lift $U_\theta: \mathcal F [\mathfrak{h}]\to
\mathcal F [\mathfrak{h}]$ by the lift condition $U_\theta a(h)^*
U_\theta^{-1}=a(u_\theta h)^*$, $h\in\mathfrak{h}$,   and $U_\theta \Omega=\Omega$.   This defines $U_\theta$
uniquely. With slight abuse of
notation, we also denote $\mathbbm 1_{\mathcal K}\otimes U_\theta$ on $\mathcal
H$ by the same symbol $U_\theta$.

We say that
$ \Psi \in \mathcal F[\mathfrak{h}]$ is an analytic vector  if the map $ \theta
 \mapsto \Psi^\theta := U_\theta \Psi $  has an analytic continuation  from an open connected set in the real line to a (connected) domain in the complex plane.
\end{definition}
We define the family of transformed Hamiltonians, for $\theta \in
 \R$,
\begin{align}\label{Hthetaaaa}
H^\theta :=U_\theta H U_\theta^{-1} =K + H^\theta_f +g V^\theta,
\end{align}
where
\begin{align}
    \label{eq:Hftheta}
H_f^\theta:= \int \mathrm{d^3}k \, \omega^\theta(k) a^*(k) a(k) , \qquad
V^\theta:= \sigma_1 \otimes  \left(a(f^{\overline \theta})+
a(f^{\theta})^*  \right)
\end{align}
and
\begin{align}
\label{def:thetafncts}
\omega^\theta(k):= e^{-\theta}|k|, \qquad f^\theta: \R^3\setminus \{0\}\to\R , \quad k\mapsto e^{-\theta (1+\mu)} e^{-e^{2\theta}\frac{k^2}{\Lambda^2}}|k|^{-\frac{1}{2}+\mu}.
\end{align}
  Eqs.\ \eqref{def:thetafncts}, \eqref{eq:Hftheta} and the
right-hand side of  \eqref{Hthetaaaa} can be defined for complex
$\theta$ (see, e.g., \cite[Lemma 1.4]{bdh-scat}).
For sufficiently small coupling constants and
 $\theta \in \mathcal S$, where $\mathcal S$ is a suitable subset of the complex plane defined in   \eqref{def:setS}  below,
it has been shown that $H^\theta$ has two non-degenerate eigenvalues
$\lambda^\theta_0$ and $\lambda^\theta_1$ with corresponding rank one
projectors denoted by $P^\theta_0$ and $P^\theta_1$, respectively; see, e.g.,
\cite[Proposition 2.1]{bdh-res}.
Note that there the $\theta$-dependence was omitted in the notation. For convenience of the reader, we make it explicit in this paper.
 The corresponding dilated eigenstates can,
therefore, be written as
\begin{align}
\label{eq:gsvec}
\Psi^\theta_{\lambda_i}:=  P_i^\theta  \varphi_i\otimes \Omega  , \qquad i=0,1 .
\end{align}
where the eigenstates $\varphi_i$ of the free atomic system are given in
\eqref{varphi}, and $\Omega$ is the bosonic vacuum defined in \eqref{Omega}.
In our notation $\Psi^\theta_{\lambda_i}$ is not necessarily normalized.  We
know from \cite[Theorem 2.3]{bdh-res} that the eigenvalues $\lambda^\theta_i$
are independent of $\theta$
as long as  $\theta $ belongs to   $\mathcal S$
 and, therefore, we suppress it in our notation
writing $\lambda^\theta_i\equiv\lambda_i$. In the case that $i = 1$, it is necessary that $0$  does not belong to  $  \mathcal S  $.
This is not required  if $i=0$, and in this situation we extend the set $\mathcal{S}$, with the same notation, to an open connected set that contains $0$  (see  \cite[Definition 1.4 and Remark 2.4]{bdh-res}). From this, it is easy to see that  $\Psi^{\theta=0}_{\lambda_0}= \Psi_{\lambda_0}$ - as introduced above.

\subsection{Scattering theory}
\label{sec:scattering}
Finally, we give a short review of scattering theory which is necessary to
state the main result  in
Section~\ref{sec:mainresult}. For a more detailed introduction we refer to \cite[Section 1.3]{bdh-scat}.

\begin{definition}[Basic components of scattering theory]
\label{defasymptop}

We denote by
    $\mathfrak{h}_0$
 the set of     smooth complex-valued
functions on $\R^3  $ with compact support contained in
 $\R^3  \setminus \{0 \}  $.

We define the following objects:
\begin{enumerate}
    \item[(i)]
    For $h\in\mathfrak{h}_0$ and $\Psi \in \mathcal K\otimes \mathcal D(H_f^{1/2})$, the asymptotic annihilation operators
\begin{align}
    \label{asymptop}
    a_\pm(h)\Psi :=\lim\limits_{t\to\pm \infty}a_t(h)\Psi, \quad
    a_t(h):=e^{itH}a(h_t) e^{-itH},
    \quad
  h_t(k):=h(k) e^{ - it\omega(k)} .
\end{align}
Moreover, we define the asymptotic creation operators
 $a_\pm^*(h)$ as the respective adjoints.
\item[(ii)] The asymptotic Hilbert spaces
\begin{align}
\mathcal H^\pm :=\mathcal K^\pm \otimes \mathcal F\left[\mathfrak{h}\right]
\quad \text{where} \quad  \mathcal K^\pm:=\left\lbrace \Psi\in \mathcal H :
a_\pm(h) \Psi=0 \,\,\,\, \forall h\in \mathfrak{h}_0    \right\rbrace .
\end{align}
\item[(iii)] The wave operators
\begin{align}
\label{intertwining}
&\Omega_\pm :\mathcal H^\pm \to \mathcal H
\\ \notag
&\Omega_\pm \Psi \otimes a^*(h_1)...a^*(h_n) \Omega:=a^*_\pm (h_1)...a^*_\pm (h_n) \Psi, \quad h_1,...,h_n \in \mathfrak{h}_0,  \quad \Psi \in \mathcal K^\pm .
\end{align}
\item[(iv)] The scattering operator
 $S:=\Omega^*_+\Omega_-$.
\end{enumerate}
\end{definition}
The limit operators $a_\pm$ and $a_\pm^*$ are called asymptotic
outgoing/ingoing annihilation and creation operators. The existence of the
limits in \eqref{asymptop}  and their
properties  (for example  that $\Psi_{\lambda_0}\in\mathcal K^\pm$) are well-known (see e.g.
\cite{fau1,fau2,fau3,fgs1,fgs2,rgk,rk,rk2,derezinski,bkz}). For a detailed proof we refer to \cite[Lemma 4.1]{bdh-scat}.
We can thus define the following  scattering matrix coefficients for one-boson processes:
\begin{align}
\label{eq:2bodyscat}
S(h,l)= \norm{\Psi_{\lambda_0}}^{-2}\left\langle
a^*_+(h)\Psi_{\lambda_0},a^*_-(l) \Psi_{\lambda_0} \right\rangle, \qquad \forall
h,l\in \mathfrak h_0 ,
\end{align}
where the factor $\norm{\Psi_{\lambda_0}}^{-2}$ appears due to the fact that,
as already mentioned above,  in our notation, the ground state
$\Psi_{\lambda_0}$ is not necessarily normalized.  In addition, it will be
convenient to work with the corresponding  transition matrix
coefficients for one-boson processes given by
\begin{align}
T(h,l)=S(h,l) -\left\langle h , l \right\rangle_2 \qquad \forall h,l\in \mathfrak h_0 .
\label{eq:Tmatrix}
\end{align}
Physically, these matrix coefficients may be interpreted as transition
amplitudes of the scattering process in which an incoming boson with wave
function $l$ is scattered at the two-level atom into an outgoing boson with
wave function $h$.  Notice that the transition matrix coefficients of
multi-boson processes can be defined likewise but in this work we focus on
one-boson processes only.

\section{Main results}
\label{sec:mainresult}

We are now able to state our main result which  provides
the precise relation between the one-boson transition matrix elements and the  resolvent of the complex
dilated Hamiltonian. The corresponding proofs will be
provided in Section \ref{sec:proof-mainresult}.
\begin{theorem}[Scattering Formula]
\label{FK}
For sufficiently small $g$, $ \theta $ in a suitable subset
    $\mathcal{S} \subset \C$ (see \eqref{def:setS}),
and
 for all $h,l\in\mathfrak{h}_0$, the
 transition matrix coefficients for one-boson processes are given by
\begin{align}
\label{scatteringformula}
T(h,l)=\int \mathrm{d^3} k  \mathrm{d^3} k' \, \overline{h(k)} l(k') \delta(\omega(k)-\omega(k')) T(k,k')
\end{align}
where
\begin{align}
\label{scatteringkernel}
T(k,k')
=- 2  \pi ig^2 f(k) f(k')\norm{\Psi_{\lambda_0}}^{-2}\bigg( &
\left\langle    \sigma_1 \Psi^{\overline \theta}_{\lambda_0},\left( H^{ \theta}-\lambda_0-|k'| \right)^{-1} \sigma_1 \Psi^{ \theta}_{\lambda_0}\right\rangle
\notag \\
& +\left\langle    \sigma_1 \Psi^{ \theta}_{\lambda_0},\left( H^{\overline \theta}-\lambda_0+|k'| \right)^{-1} \sigma_1 \Psi^{\overline \theta}_{\lambda_0}\right\rangle
 \bigg)  .
\end{align}
\end{theorem}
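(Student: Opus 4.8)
The plan is to compute the transition matrix coefficient $T(h,l)$ directly from its definition \eqref{eq:Tmatrix}, \eqref{eq:2bodyscat}, by expressing the difference between the outgoing and incoming asymptotic configurations as a time integral, and then identifying the resulting resolvent kernels after complex dilation. First I would use the intertwining property of the wave operators together with the definition of $a_\pm^*$ to write
\[
T(h,l)=\norm{\Psi_{\lambda_0}}^{-2}\Big(\big\langle a_+^*(h)\Psi_{\lambda_0},a_-^*(l)\Psi_{\lambda_0}\big\rangle-\langle h,l\rangle_2\norm{\Psi_{\lambda_0}}^2\Big),
\]
and then invoke the fundamental theorem of calculus in the form $a_-^*(l)-a_+^*(l)=-\int_{-\infty}^{\infty}\tfrac{d}{dt}a_t^*(l)\,dt$ applied to $\Psi_{\lambda_0}$ (this is where the results quoted from \cite[Lemma 4.1]{bdh-scat} and the analogous commutator computations enter, controlling the existence of the limits and the integrability of the integrand). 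The derivative $\tfrac{d}{dt}a_t^*(l)=e^{itH}i[H,a^*(l_t)]e^{-itH}+e^{itH}a^*(\partial_t l_t)e^{-itH}$ collapses, using $[H_0,a^*(l_t)]=-a^*(\partial_t l_t)/i$ and the explicit form of $V$ in \eqref{interaction}, to a term proportional to $g\,\sigma_1\otimes a^*$-free contributions, namely $g\,\overline{f}$-weighted scalar contractions of $\sigma_1\Psi_{\lambda_0}$. Pairing against $a_+^*(h)\Psi_{\lambda_0}$ and using $a_+(h)\Psi_{\lambda_0}=0$ (so that only the "pull-through" term survives) produces an expression of the schematic form $-ig\int dt\,\langle\,\cdot\,,e^{itH}(\text{rank-one in the one-boson sector})e^{-itH}\,\cdot\,\rangle$ with the explicit $\overline{h(k)}l(k')$ oscillatory weights $e^{it(\omega(k)-\omega(k'))}$.

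The next step is to perform the $t$-integration. The $\int dt\,e^{it(\omega(k)-\omega(k'))}(\cdots)$ will, after an $i\varepsilon$-regularization and the standard identity $\int_{0}^{\infty}e^{it(x+i\varepsilon)}dt=\tfrac{i}{x+i\varepsilon}$ (splitting $\int_{-\infty}^{\infty}=\int_{-\infty}^{0}+\int_{0}^{\infty}$), generate two resolvent factors $(H-\lambda_0-|k'|)^{-1}$ and $(H-\lambda_0+|k'|)^{-1}$ acting between the appropriate vectors, plus a $\delta(\omega(k)-\omega(k'))$ coming from the boundary/principal-value piece — this is precisely the energy-conservation $\delta$ and the overall $-2\pi i g^2$ prefactor in \eqref{scatteringkernel}. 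One factor of $g$ comes from the commutator above; the second factor of $g$ and the second $f$ come from expanding $\Psi_{\lambda_0}=P_0\varphi_0\otimes\Omega$ and pulling a creation operator through, or equivalently from the fact that the incoming state $a_-^*(l)\Psi_{\lambda_0}$ itself must be re-expressed via the analogous time integral so that both $f(k)$ and $f(k')$ appear symmetrically. The two terms in \eqref{scatteringkernel} correspond to the two time-ordered sectors $t>0$ and $t<0$ (equivalently, outgoing vs.\ incoming leg carrying the intermediate boson of energy $|k'|$).

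Finally I would justify replacing the undilated resolvents $(H-\lambda_0\mp|k'|)^{-1}$, which are singular because $\lambda_0\pm|k'|$ lies in the continuous spectrum, by the dilated resolvents $(H^{\theta}-\lambda_0-|k'|)^{-1}$ and $(H^{\overline\theta}-\lambda_0+|k'|)^{-1}$ sandwiched between the dilated eigenvectors $\Psi^{\theta}_{\lambda_0}$, $\Psi^{\overline\theta}_{\lambda_0}$. This uses unitarity of $U_\theta$ for real $\theta$ to write $\langle\Phi,(H-z)^{-1}\Psi\rangle=\langle U_\theta\Phi,(H^\theta-z)^{-1}U_\theta\Psi\rangle$, analyticity of the vectors $\sigma_1\Psi^{\theta}_{\lambda_0}$ (here $\varphi_0\otimes\Omega$ is analytic and $P_0^\theta$ depends analytically on $\theta$ by \cite[Proposition 2.1]{bdh-res}, \cite[Theorem 2.3]{bdh-res}), analyticity of the map $\theta\mapsto\langle\cdot,(H^\theta-z)^{-1}\cdot\rangle$ into the resolvent set, and the fact that for $\Im\theta>0$ the point $\lambda_0+|k'|$ lies in the resolvent set of $H^\theta$ because the continuous spectrum has been rotated down into the lower half-plane (cf.\ the spectral picture behind \cite[Proposition 2.1]{bdh-res}); the boundary value as $\Im\theta\to 0^+$ recovers the regularized undilated expression. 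The main obstacle I anticipate is making the interchange of the $t$-integral with the inner product rigorous and controlling the error terms in the pull-through formula — i.e.\ showing that the contributions beyond the leading scalar contraction (the genuinely multi-boson pieces, and the commutator of $V$ with $a_t^*$) either vanish in the limit or are absorbed, so that the formula is exact and not merely leading order; this is exactly the point where the non-perturbative control from the multiscale analysis of \cite{bdh-res} is needed, in contrast to the perturbative treatments of \cite{bdh-scat,bdfh}.
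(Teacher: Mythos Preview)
Your overall strategy --- reduce $T(h,l)$ to a time integral of a correlation function of the form $\langle\sigma_1\Psi_{\lambda_0},e^{\pm isH}\sigma_1\Psi_{\lambda_0}\rangle$, then convert this into a resolvent via some regularization, and finally pass to the dilated picture by analytic continuation --- is the right skeleton, and this is indeed how the paper proceeds. However, the concrete mechanism you describe for the time integral is muddled in a way that would derail the computation.

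The main confusion concerns the origin of $\delta(\omega(k)-\omega(k'))$. You expect it to emerge from the ``boundary/principal-value piece'' of the $t$-integral $\int dt\,e^{it(\omega(k)-\omega(k'))}(\cdots)$ at the same moment the resolvents appear. That is not the structure. Energy conservation is enforced \emph{before} the resolvent step: the paper takes as starting point the preliminary formula Theorem~\ref{intker} (i.e.\ \cite[Theorem~4.3]{bdh-scat}), which already produces the $\delta$ and reduces everything to $\langle\sigma_1\Psi_{\lambda_0},a_-^*(W)\Psi_{\lambda_0}\rangle$ with $W$ built from $h,l,f$ on a common radial shell. Only after this does one insert \eqref{a_-} to obtain the single-time integral \eqref{eq:thl0}, whose integrand is $G(r)e^{is(r+\lambda_0)}\langle\sigma_1\Psi_{\lambda_0},e^{-isH}\sigma_1\Psi_{\lambda_0}\rangle$ with $r=|k'|$; there is no residual factor $e^{it(\omega(k)-\omega(k'))}$. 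If you try to carry both oscillatory pieces in one integral as you suggest, the resolvent you land on will sit at the wrong energy.

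The second difference is how the resolvent is extracted. The paper does \emph{not} use an $i\varepsilon$-regularization followed by ``replace undilated by dilated''. Instead it represents $e^{-isH}$ directly as a contour integral of the \emph{dilated} resolvent via Lemma~\ref{laplace}, splits the contour $\Gamma(\epsilon_n,R)$ into three pieces, and shows the $\Gamma_c,\Gamma_d$ parts vanish. On the remaining real segment the $s$-integral is computed using $\mathfrak{F}[\Theta]=\pi\delta-i\,\mathrm{PV}(1/\bullet)$ (Lemma~\ref{lemma:heaviside}), producing a $\delta$-term and a PV-term \emph{in the $z$-variable}; the PV-term is then evaluated by a second contour deformation (Lemma~\ref{lemma:thl12}), and the two combine to the factor $-2\pi$ in \eqref{eq:thl1-final}. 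The justification for interchanging the $s$-limit with the $(n,R)\to\infty$ limit is the content of Lemma~\ref{lemma:uniform}, which treats the double integral as an oscillatory integral and relies essentially on the multiscale resolvent bound \eqref{nn0} from \cite{bdh-res}. This is exactly the ``non-perturbative control'' you allude to at the end, but it enters in a specific technical form that your sketch does not capture.

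A cleaned-up $i\varepsilon$ route --- insert $e^{-\varepsilon s}$ in \eqref{eq:thl0}, use the spectral theorem to get $-i\langle\Phi,(H-\lambda_0-r-i\varepsilon)^{-1}\Phi\rangle$, then analytically continue in $\theta$ and let $\varepsilon\to 0$ using that $\lambda_0+r\in\rho(H^\theta)$ on $\operatorname{supp}G$ --- is in fact a viable and arguably shorter alternative to the paper's contour machinery, but your proposal does not present it this way; as written, the roles of the energy-$\delta$ and the resolvent are entangled and the argument would not close.
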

The integral with respect to the  Dirac's delta distribution
distribution $\delta$  in \eqref{scatteringformula} is
to be understood as
\begin{align}
T(h,l)=\int_0^\infty \mathrm{d}|k| \int \mathrm{d}\Sigma \mathrm{d}\Sigma' \,
\overline{h(|k|, \Sigma)} l(|k|,\Sigma') T(|k|,\Sigma,|k|,\Sigma'),
\end{align}
where we have introduced spherical coordinates $k=(|k|,\Sigma)$ with $\Sigma$
being the solid angle and $T(k,k')\equiv T(|k|,\Sigma,|k|,\Sigma')$ is given by
\eqref{scatteringkernel}.  Notice that \eqref{scatteringkernel} is not defined
for $k=0$ or $k'=0$. However, since we take $h,l\in  \mathfrak{h}_0$, the expression \eqref{scatteringformula}
is well-defined.  Representing such matrix elements in terms of a distribution
kernel is convenient (in our case, e.g., it makes the energy conservation
apparent) and also frequently used in the literature. In particular, similar
distribution kernels in a closely related model have been studied in
\cite{bkz,bdh-scat}.
\begin{remark}
\label{rem:connection}
In a similar vein as in \cite{bdh-scat}, we can apply perturbation theory together with the spectral properties obtained in \cite{bdh-res} in order to deduce a  result as  \cite[Theorem 2.2]{bdh-scat} from Theorem \ref{FK} above. Then, one can again see the Lorentzian shape of the integral kernel which was explained in detail in \cite{bdh-scat}.
\end{remark}
    In the remainder of this  work we denote by $ C $
    any generic (indeterminate), positive constant that might change from line to line but does not depend on the coupling constant.

\section{Proof of the main result}
\label{sec:proof-mainresult}

In the remainder of this work we provide the proof of  Theorem~\ref{FK}. This section has three parts: In
Section~\ref{sec:proof-prelim}, we recall a preliminary formula for the
scattering matrix coefficients; c.f.\ Theorem \ref{intker} below, which was proven in \cite[Theorem 4.3]{bdh-scat}. This formula
together with several technical ingredients provided in
Section~\ref{sec:proof-techingredients} and \ref{sec:key}  pave the way for the proof of
our main result given in Section~\ref{sec:proof-mainthm}.

\subsection{Preliminary scattering formula}
\label{sec:proof-prelim}
The following theorem has been proven in \cite[Theorem 4.3]{bdh-scat}.
\begin{theorem}[Preliminary Scattering Formula]
   \label{intker}
   For $h,l\in \mathfrak{h}_0$, the  transition matrix coefficient for one-boson processes
   $T(h,l)$  defined in \eqref{eq:Tmatrix} fulfills
\begin{align}
    T(h,l)=  \lim\limits_{t\to - \infty}\int \mathrm{d^3}k \mathrm{d^3}k' \,
\overline{h(k)} l(k') \delta(\omega(k)-\omega(k')) T_t(k,k')
\label{T}
\end{align}
for the integral kernel
\begin{align}
\label{intkernel}
T_t(k,k')=-2\pi i g f(k)  \norm{\Psi_{\lambda_0}}^{-2}{\langle \sigma_1 \Psi_{\lambda_0},
a_t(k')^* \Psi_{\lambda_0}\rangle}.
\end{align}
\end{theorem}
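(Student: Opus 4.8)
The plan is to reduce the statement to a Cook-type argument combined with the canonical commutation relations of the asymptotic field operators. Since $\Psi_{\lambda_0}\in\mathcal K^\pm$, i.e.\ $a_\pm(h)\Psi_{\lambda_0}=0$ for $h\in\mathfrak h_0$, and since $a_-$ and its adjoint obey $[a_-(h),a_-(l)^*]=\langle h,l\rangle_2$, one has $\langle h,l\rangle_2\norm{\Psi_{\lambda_0}}^2=\langle a_-(h)^*\Psi_{\lambda_0},a_-(l)^*\Psi_{\lambda_0}\rangle$, so that by \eqref{eq:Tmatrix}
\begin{align*}
T(h,l)\,\norm{\Psi_{\lambda_0}}^2=\big\langle\,(a_+(h)^*-a_-(h)^*)\,\Psi_{\lambda_0}\,,\,a_-(l)^*\Psi_{\lambda_0}\,\big\rangle .
\end{align*}
The existence of the vectors $a_\pm(h)^*\Psi_{\lambda_0}$, and the fact that they coincide with the strong limits $\lim_{t\to\pm\infty}a_t(h)^*\Psi_{\lambda_0}$, is supplied by \cite[Lemma 4.1]{bdh-scat}, which I would invoke rather than reprove.

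The next step is the Heisenberg derivative of $s\mapsto a_s(h)^*=e^{isH}a^*(h_s)e^{-isH}$. Using $\partial_s h_s=-i\omega h_s$ together with $[H_f,a^*(h_s)]=a^*(\omega h_s)$, $[K,a^*(h_s)]=0$ and $[a(f),a^*(h_s)]=\langle f,h_s\rangle_2$, the free-field contributions cancel and only the interaction survives, so that $\tfrac{d}{ds}a_s(h)^*=ig\,\langle f,h_s\rangle_2\,e^{isH}\sigma_1 e^{-isH}$, hence, using $H\Psi_{\lambda_0}=\lambda_0\Psi_{\lambda_0}$ with $\lambda_0\in\R$,
\begin{align*}
\frac{d}{ds}\big(a_s(h)^*\Psi_{\lambda_0}\big)=ig\,\langle f,h_s\rangle_2\,e^{is(H-\lambda_0)}\sigma_1\Psi_{\lambda_0}.
\end{align*}
Since $h$ and $f$ are smooth with supports in $\R^3\setminus\{0\}$, the scalar $s\mapsto\langle f,h_s\rangle_2$ is a radial Fourier transform of a smooth, compactly supported function and is therefore Schwartz, in particular integrable; thus $s\mapsto a_s(h)^*\Psi_{\lambda_0}$ is $C^1$ with Bochner-integrable derivative, which justifies $a_+(h)^*\Psi_{\lambda_0}-a_-(h)^*\Psi_{\lambda_0}=\int_\R\tfrac{d}{ds}(a_s(h)^*\Psi_{\lambda_0})\,\mathrm{d}s$. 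Inserting this into the first display and using $\overline{ig\langle f,h_s\rangle_2}=-ig\langle h_s,f\rangle_2$ (recall $g>0$ and $f$ is real) gives
\begin{align*}
T(h,l)\,\norm{\Psi_{\lambda_0}}^2=-ig\int_\R \mathrm{d}s\;\langle h_s,f\rangle_2\;\big\langle\,\sigma_1\Psi_{\lambda_0}\,,\,e^{-is(H-\lambda_0)}a_-(l)^*\Psi_{\lambda_0}\,\big\rangle .
\end{align*}

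It remains to unwind the dynamics inside the bracket and to recognise the distribution kernel. The intertwining relation $e^{iuH}a_-(l)^*e^{-iuH}=a_-(l_{-u})^*$, immediate from the definition of $a_t$ and of the wave operators, combined with $H\Psi_{\lambda_0}=\lambda_0\Psi_{\lambda_0}$, yields $e^{-is(H-\lambda_0)}a_-(l)^*\Psi_{\lambda_0}=a_-(l_s)^*\Psi_{\lambda_0}$. Passing to momentum space, $\langle h_s,f\rangle_2=\int \mathrm{d^3}k\,\overline{h(k)}f(k)e^{is\omega(k)}$ and $a_-(l_s)^*\Psi_{\lambda_0}=\int \mathrm{d^3}k'\,l(k')e^{-is\omega(k')}a_-(k')^*\Psi_{\lambda_0}$, with $a_-(k')^*$ read as an operator-valued distribution, and carrying out the $s$-integration via $\int_\R e^{is(\omega(k)-\omega(k'))}\mathrm{d}s=2\pi\delta(\omega(k)-\omega(k'))$ produces
\begin{align*}
T(h,l)=-2\pi ig\,\norm{\Psi_{\lambda_0}}^{-2}\int \mathrm{d^3}k\,\mathrm{d^3}k'\;\overline{h(k)}\,l(k')\,\delta(\omega(k)-\omega(k'))\,f(k)\,\big\langle\,\sigma_1\Psi_{\lambda_0}\,,\,a_-(k')^*\Psi_{\lambda_0}\,\big\rangle .
\end{align*}
Substituting $a_-(k')^*\Psi_{\lambda_0}=\lim_{t\to-\infty}a_t(k')^*\Psi_{\lambda_0}$ and pulling the limit out of the momentum integral reproduces exactly \eqref{T}--\eqref{intkernel}.

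The algebra above is routine; the substance lies in the analytic bookkeeping of the last two displays, which is precisely what \cite[Section 4]{bdh-scat} provides: giving rigorous meaning to $a_-(k')^*$ and $a_t(k')^*$ as operator-valued distributions tested against $h,l\in\mathfrak h_0$; justifying the Fubini interchange that produces $2\pi\delta(\omega(k)-\omega(k'))$; and, most importantly, exchanging the $t\to-\infty$ limit with the $\mathrm{d^3}k\,\mathrm{d^3}k'$ integration, which requires a uniform-in-$t$ integrable bound on $T_t(k,k')$ -- equivalently, control of $a_t(l)^*\Psi_{\lambda_0}$ as $t\to-\infty$ for $l$ localised on the energy shells $\{\omega=\mathrm{const}\}$. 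This exchange of limits is the main obstacle, and it is where the propagation estimates of \cite{bdh-scat} enter.
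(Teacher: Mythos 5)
The paper does not prove Theorem \ref{intker} itself but simply cites \cite[Theorem 4.3]{bdh-scat}; your Cook-method derivation — asymptotic CCR plus $a_\pm(h)\Psi_{\lambda_0}=0$ to isolate $(a_+(h)^*-a_-(h)^*)\Psi_{\lambda_0}$, the Heisenberg derivative of $a_s(h)^*$ in which the free-field terms cancel, the intertwining $e^{-is(H-\lambda_0)}a_-(l)^*\Psi_{\lambda_0}=a_-(l_s)^*\Psi_{\lambda_0}$, and the $s$-integration producing $2\pi\delta(\omega(k)-\omega(k'))$ — is correct and is exactly the argument underlying that reference, your integrated Heisenberg derivative being precisely the formula \eqref{a_-} quoted in Appendix A of the present paper. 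The technical points you defer (the asymptotic CCR on the relevant domain, the bound $|\langle h_s,f\rangle_2|\leq C/(1+s^2)$ that justifies the Fubini and limit interchanges, and the identification of the smeared kernel with $a_-(W)^*$ as in \eqref{eq:Tprecise1}) are exactly those supplied by \cite[Lemma 4.1]{bdh-scat}, so nothing essential is missing.
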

The integral in \eqref{T}
is to be understood as
\begin{align}
T(h,l)= -2\pi ig \norm{\Psi_{\lambda_0}}^{-2}\bigg\langle
 \sigma_1  \Psi_{\lambda_0}, a_-(W)^* \Psi_{\lambda_0} \bigg\rangle
 \label{eq:Tprecise1}
\end{align}
for $W\in\mathfrak{h}_0$ given by
\begin{align}
\label{def:W1st}
\R^3\ni k\mapsto W( k):=|k|^2 l(k) \int\mathrm{d}\Sigma \, \overline{h(|k|,\Sigma)}f(|k|,\Sigma)
\end{align}
using spherical coordinates $k=(|k|,\Sigma)$ with $\Sigma$ being the solid
angle.

\subsection{General ingredients for the proof of the main theorem}
\label{sec:proof-techingredients}
Here, we state some general results which are applied in the proof of our main
  result, see Section \ref{sec:proof-mainthm}.  Most of the statements in this
section are   formulated without motivation. However, their importance
 becomes clear later in Section \ref{sec:proof-mainthm}.
At first, we recall a representation formula of the time-evolution operator similar
to the Laplace transform representation (see, e.g., \cite{bach}). This formula is an important ingredient for the proof of the perturbative scattering formula in \cite{bdh-scat} and it  plays a relevant role in the present work. For a detailed proof we refer to \cite[Lemma 4.5]{bdh-scat}.
\begin{lemma}
\label{laplace}
For $\epsilon>0$, $\nu =\Im \theta >0$ and sufficiently large $R >0$, we consider the concatenated contour
$\Gamma(\epsilon,R):=\Gamma_{-}(\epsilon,R)\cup
\Gamma_{c}(\epsilon)\cup \Gamma_{d}(R)$ (see Figure \ref{fig:CurveGamma}),
where
\begin{align}
\Gamma_{-}(\epsilon,R)&:=[-R, \lambda_0 -\epsilon]\cup [\lambda_0 +\epsilon,R
],
\notag \\
\Gamma_{d}(R)&:=\left\lbrace -R -ue^{i\frac{\nu}{4}} :u\geq 0 \right\rbrace
\cup \left\lbrace R +ue^{-i\frac{\nu}{4}} :u\geq 0 \right\rbrace ,
\notag \\
\Gamma_{c}(\epsilon)&:=\left\{ \lambda_0 -\epsilon e^{-it}: t\in [0,\pi]
\right\}.
\label{Gamma-parts}
\end{align}
The  orientations of the contours in \eqref{Gamma-parts} are given by the arrows depicted in Figure
\ref{fig:CurveGamma}.
Then, for all analytic vectors $\phi,\psi\in\mathcal H$
(analytic in a --  connected --  domain containing   $0$)
    and $t>0$, the following identity holds true:
\begin{align}
\left\langle \phi , e^{-itH} \psi  \right\rangle =
\frac{1}{2\pi i}\int_{\Gamma(\epsilon,R)}\mathrm{d}z\, e^{-itz} \left\langle \psi^{\overline{\theta}} , \left( H^\theta-z  \right)^{-1} \phi^\theta \right\rangle  .
\label{eq:laplace}
\end{align}
\end{lemma}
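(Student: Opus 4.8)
The plan is to represent $\langle\phi,e^{-itH}\psi\rangle$ first as an integral of the \emph{undilated} resolvent of $H$ over a horizontal line above the real axis, then to replace that resolvent by the dilated object $\langle\psi^{\overline\theta},(H^\theta-z)^{-1}\phi^\theta\rangle$, which agrees with it above $\R$ but continues holomorphically far into the lower half plane, and finally to push the line of integration down onto $\Gamma(\epsilon,R)$, exploiting that $e^{-itz}$ decays in the lower half plane precisely because $t>0$.

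The point of departure is the Laplace--Fourier inversion identity, valid for $t>0$ and $\epsilon>0$,
\[
  \langle\phi,e^{-itH}\psi\rangle=\frac{1}{2\pi i}\int_{\Im z=\epsilon}e^{-itz}\,\langle\phi,(H-z)^{-1}\psi\rangle\,\mathrm dz ,
\]
the line being oriented towards increasing $\Re z$. This is just the inversion of the Fourier transform of the bounded continuous function $t\mapsto\langle\phi,e^{-itH}\psi\rangle$, using $(H-z)^{-1}=i\int_0^\infty e^{-isH}e^{isz}\,\mathrm ds$ for $\Im z>0$; it is the self-adjoint analogue of the representation used, e.g., in \cite{bach}, and the line integral is to be read as an improper integral (the resolvent matrix element decays only like $|z|^{-1}$ along the line, but this causes no difficulty, since in the end the tails become the rays $\Gamma_d(R)$, where the decay is exponential). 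Since $\phi,\psi$ are analytic vectors on a connected set containing $0$ and $\theta$, and $\theta'\mapsto(H^{\theta'}-z)^{-1}$ is analytic, the map $\theta'\mapsto\langle\psi^{\overline{\theta'}},(H^{\theta'}-z)^{-1}\phi^{\theta'}\rangle$ is holomorphic and equals $\langle\phi,(H-z)^{-1}\psi\rangle$ for real $\theta'$ by unitarity of $U_{\theta'}$ (up to the standard bookkeeping of complex conjugates in the dilation conventions, cf.\ \cite[Lemma~1.4]{bdh-scat}); by the identity theorem one gets $\langle\phi,(H-z)^{-1}\psi\rangle=\langle\psi^{\overline\theta},(H^\theta-z)^{-1}\phi^\theta\rangle$ throughout the upper half plane.

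The gain from this substitution is that the right-hand side is meromorphic in $z$ on a region far larger than $\C\setminus\R$. By the spectral analysis of \cite{bdh-res} (see \cite[Proposition~2.1, Theorem~2.3]{bdh-res}), for $g$ small and $\theta$ admissible with $\nu=\Im\theta>0$ one has $\sigma(H^\theta)=\{\lambda_0\}\cup\{\lambda_1\}\cup\sigma_{\mathrm{ess}}(H^\theta)$ with $\lambda_0$ a simple eigenvalue on the real axis, $\Im\lambda_1<0$, and $\sigma_{\mathrm{ess}}(H^\theta)$ a finite union of half-lines issuing from $\lambda_0$ and $\lambda_1$ into the lower half plane at angle $\nu$ to the horizontal, and the resolvent obeys polynomial bounds off $\sigma(H^\theta)$. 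As $\nu/4<\nu$, the contour $\Gamma(\epsilon,R)$ misses $\sigma_{\mathrm{ess}}(H^\theta)$ altogether, the integrand $z\mapsto e^{-itz}\langle\psi^{\overline\theta},(H^\theta-z)^{-1}\phi^\theta\rangle$ is holomorphic on and around it, and $\lambda_1$ lies strictly below $\Gamma(\epsilon,R)$ (it has negative imaginary part, and near $\Re\lambda_1$ the contour runs along $\R$). It then remains to deform the line $\{\Im z=\epsilon\}$ down onto $\Gamma(\epsilon,R)$: on $|\Re z|\le R$ onto the real axis, detoured over the pole at $\lambda_0$ along the upper semicircle $\Gamma_c(\epsilon)$ without crossing it, and on $|\Re z|>R$ down onto the rays $\Gamma_d(R)$. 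On the region swept out by this deformation the integrand is holomorphic ($\lambda_0$ is excluded by $\Gamma_c$, while $\lambda_1$ and $\sigma_{\mathrm{ess}}(H^\theta)$ lie outside), so Cauchy's theorem applies once the vertical connecting arcs at $\Re z=\pm N$ are shown to vanish as $N\to\infty$; and they do, because there $\operatorname{dist}(z,\sigma(H^\theta))\gtrsim N$ uniformly on the relevant range of heights, so the integrand is $O(N^{-1})$ near the top of the arc and $|e^{-itz}|=e^{t\Im z}$ suppresses the rest. On $\Gamma_d(R)$ one has $\Im z\to-\infty$, so $|e^{-itz}|$ decays exponentially against the polynomially bounded resolvent and the tail integrals converge absolutely. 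Collecting the three contributions gives $\frac{1}{2\pi i}\int_{\Gamma(\epsilon,R)}e^{-itz}\langle\psi^{\overline\theta},(H^\theta-z)^{-1}\phi^\theta\rangle\,\mathrm dz$, which therefore equals $\langle\phi,e^{-itH}\psi\rangle$.

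The contour surgery itself is routine; the real work --- and what makes this a statement about quantum field theory rather than a textbook fact about Schr\"odinger operators --- is securing its prerequisites: the precise location of $\sigma(H^\theta)$ relative to $\Gamma(\epsilon,R)$ and, above all, the quantitative bounds on the dilated resolvent on and above the rotated essential spectrum that make the arcs at infinity negligible and justify the interchanges of limits. These are exactly what the multiscale analysis of \cite{bdh-res} supplies, which is why the lemma is built on that machinery and not proved from scratch. A more pedestrian but genuinely delicate point is the bookkeeping of the dilation conventions --- which vector is dilated with $\theta$, which with $\overline\theta$, and where the complex conjugates sit --- ensuring that the analytically continued quantity is holomorphic in $\theta$ and reduces to the undilated matrix element at $\theta=0$.
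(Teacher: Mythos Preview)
Your outline is correct and follows the standard Laplace-transform/contour-deformation route that the paper has in mind: the paper does not reprove Lemma~\ref{laplace} here but imports it verbatim from \cite[Lemma~4.5]{bdh-scat}, describing it precisely as ``a representation formula of the time-evolution operator similar to the Laplace transform representation (see, e.g., \cite{bach}).'' Your three steps --- Fourier--Laplace inversion on a horizontal line, analytic continuation in $\theta$ to pass to $\langle\psi^{\overline\theta},(H^\theta-z)^{-1}\phi^\theta\rangle$, and Cauchy deformation onto $\Gamma(\epsilon,R)$ using the spectral localization \eqref{spectrum} and the resolvent bound \eqref{resA} --- are exactly that argument, and your remarks about where the genuine difficulty lies (the multiscale input from \cite{bdh-res}) match the paper's own emphasis.
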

\begin{figure}[h]
\centering
\includegraphics[width=\textwidth]{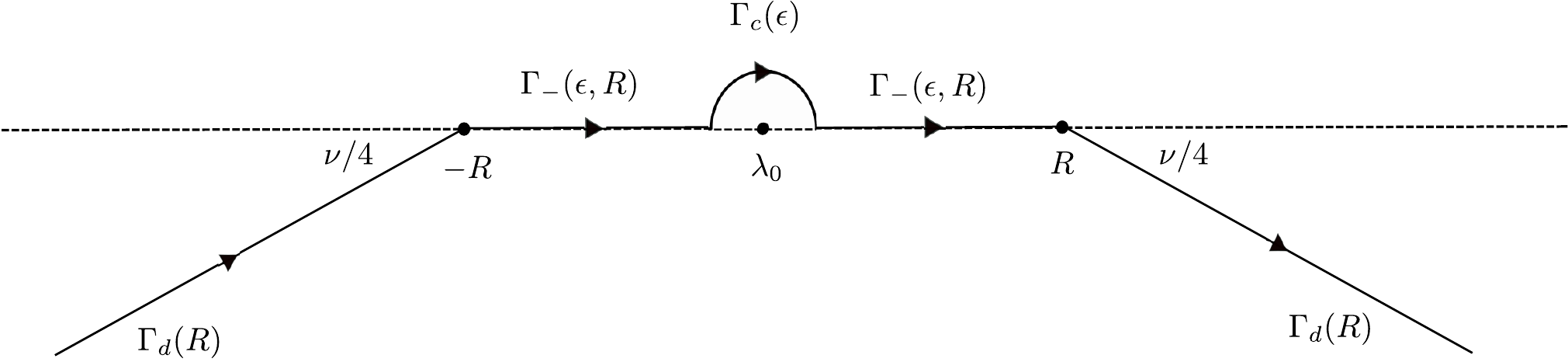}
  \caption{An illustration of the contour
  $\Gamma(\epsilon,R):=\Gamma_{-}(\epsilon,R)\cup \Gamma_{c}(\epsilon)\cup
  \Gamma_{d}(R)$. }
    \label{fig:CurveGamma}
\end{figure}

In this paper we use a non-standard definition of the Fourier transform and its inverse:
\begin{align}\label{def:fourier-distri}
\mathfrak{F}[u](x):= \int_\R \mathrm{d}s \,  u(s)e^{-isx},
\qquad
\mathfrak{F}^{-1}[u](x):= (2
\pi)^{-1}\int_\R \mathrm{d}s \,  u(s)e^{isx},
\end{align}
where $u \in \mathit S(\R,\C)  $ (the Schwartz space).
We utilize the same symbols (and names) for their dual transformation on
$S'(\R,\C)$ (the space of tempered distributions). We identify, as usual,  functions $f\in L^p(\R,\C)$ (for some $p \in [1, \infty]$) with their induced tempered distributions in $  \mathit S'(\R,\C) $ ($f(u) = \int uf  $) and, similarly, we  identify  functions $f\in L^1_\text{loc}(\R,\C)$   with their induced distributions in $   \big( C_0^\infty(\R,\C) \big )' $.  We denote by $\Theta$ the Heaviside function (or distribution, or tempered distribution) and by $\delta$ the Dirac $\delta $ distribution (or tempered distribution):
\begin{align}\label{def:delta-heavi}
 \Theta(x):=\begin{cases} 1 \quad &\text{for}
\quad x\geq 0 \\   0 \quad &\text{for} \quad x< 0 \end{cases},  \hspace{1cm} \Theta(u) = \int_0^{\infty} u(x)dx,  \hspace{1cm} \delta (u) = u(0),
\end{align}
for $u \in  \mathit S'(\R,\C) $.
\begin{lemma}
\label{lemma:heaviside}
We denote by $\left(\text{PV}\left(1/\bullet \right)\right)\in  S'(\R,\C)  $
the principal value:
\begin{align}
\label{def:princval}
\left(\text{PV}\left(1/ \bullet\right)\right)(\varphi) \equiv
\text{PV}\int_\R\mathrm{d}s \, \frac{1}{s} \varphi(s) :=\lim\limits_{\eta\to
0^+} \int_{\R\setminus [-\eta,\eta]}\mathrm{d}s \, \frac{1}{s} \varphi(s)
\qquad \forall \varphi \in  \mathit S(\R,\C)     .
\end{align}
  It follows that
\begin{align}
\mathfrak{F}[\Theta]= \pi \delta -i \text{PV}\left(1/\bullet \right) .
\end{align}
\end{lemma}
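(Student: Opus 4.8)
The plan is to compute the Fourier transform of the Heaviside distribution directly from the definition, using a regularization that makes the computation rigorous. First I would approximate $\Theta$ by the family $\Theta_\varepsilon(x) := \Theta(x) e^{-\varepsilon x}$ for $\varepsilon > 0$, which lies in $L^1(\R,\C) \cap L^2(\R,\C)$ and converges to $\Theta$ in $S'(\R,\C)$ as $\varepsilon \to 0^+$ (dominated convergence against a Schwartz test function). Since the Fourier transform is continuous on $S'(\R,\C)$, it suffices to compute $\lim_{\varepsilon\to 0^+}\mathfrak{F}[\Theta_\varepsilon]$ in the sense of tempered distributions. For each fixed $\varepsilon>0$ the classical Fourier integral converges absolutely:
\begin{align}
\mathfrak{F}[\Theta_\varepsilon](x) = \int_0^\infty \mathrm{d}s\, e^{-\varepsilon s} e^{-isx} = \frac{1}{\varepsilon + ix} = \frac{\varepsilon - ix}{\varepsilon^2 + x^2}.
\label{eq:heavi-reg}
\end{align}

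Next I would identify the limit of the right-hand side of \eqref{eq:heavi-reg} as a tempered distribution, splitting it into real and imaginary parts. The real part $\varepsilon/(\varepsilon^2+x^2)$ is the standard Poisson-kernel nascent delta: for any $\varphi\in S(\R,\C)$ one has $\int_\R \mathrm{d}x\, \frac{\varepsilon}{\varepsilon^2+x^2}\varphi(x) \to \pi\varphi(0)$, which is the familiar computation (substitute $x = \varepsilon t$, use dominated convergence, and $\int_\R \mathrm{d}t\,(1+t^2)^{-1} = \pi$). Hence the real part converges to $\pi\delta$. For the imaginary part $-x/(\varepsilon^2+x^2)$, I would show it converges to $-\,\mathrm{PV}(1/\bullet)$: writing $\int_\R \mathrm{d}x\, \frac{x}{\varepsilon^2+x^2}\varphi(x)$, one exploits the oddness of the kernel to rewrite it as $\int_{|x|>\eta}$ plus a remainder on $[-\eta,\eta]$; the odd-kernel symmetry kills the singular contribution of $\varphi(0)$ near the origin, and a Taylor estimate $|\varphi(x)-\varphi(0)|\le C|x|$ controls the rest, giving convergence to $\mathrm{PV}\int_\R \mathrm{d}x\,\frac{1}{x}\varphi(x)$. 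Combining, $\mathfrak{F}[\Theta] = \pi\delta - i\,\mathrm{PV}(1/\bullet)$, as claimed.

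There is no serious obstacle here; the statement is classical and the only care needed is to keep all manipulations at the level of tempered distributions — in particular, to justify that $\Theta_\varepsilon \to \Theta$ and $\mathfrak{F}[\Theta_\varepsilon]\to\mathfrak{F}[\Theta]$ in $S'(\R,\C)$, and that the pointwise limits of \eqref{eq:heavi-reg} computed against test functions are the distributional ones. The mild subtlety is the principal-value part: one must not interchange the $\varepsilon\to 0$ limit with the $x$-integration naively, since $-x/(\varepsilon^2+x^2)$ does not converge in $L^1_{\mathrm{loc}}$; the oddness-based splitting described above is exactly what circumvents this. An alternative, slightly slicker route would be to differentiate: since $\Theta' = \delta$ in $S'(\R,\C)$ and $\mathfrak{F}[\delta] = 1$, one gets $ix\,\mathfrak{F}[\Theta](x) = 1$ up to the ambiguity of adding a multiple of $\delta$ (the kernel of multiplication by $x$ on $S'$), and the multiple is then fixed by a parity/evenness consideration together with $\Theta + \check\Theta = 1$; but the direct regularization argument is the more transparent one to write out.
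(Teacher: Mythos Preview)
Your proposal is correct and follows essentially the same route as the paper's proof: both regularize $\Theta$ by $\Theta_\varepsilon(x)=\Theta(x)e^{-\varepsilon x}$ (the paper calls this $g_\alpha$), compute the resulting Fourier transform as $(\varepsilon+ix)^{-1}$, split into real and imaginary parts, and identify the distributional limits as $\pi\delta$ via the Poisson kernel and $\mathrm{PV}(1/\bullet)$ via an oddness/splitting argument near the origin. The only cosmetic difference is that the paper works from the outset at the level of pairings with test functions and chooses the specific cutoff $|s|=\alpha^8$ for the principal-value estimate, whereas you phrase the first step as a pointwise computation followed by a distributional limit; the content is the same.
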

The above result can be shown using methods from standard distribution theory.
However, for the sake of completeness, we present a proof in Appendix
\ref{app:heaviside}.

\subsection{Key estimates}
\label{sec:key}

In this section we establish two key estimates for the proof of the main theorem.  We point out to the reader that they   strongly rely on the results obtained in \cite{bdh-res}. However, for simplicity and due to the fact that the important features have already been studied in \cite[Section 4.3]{bdh-scat}, we  omit the details related to the multiscale analysis carried out in \cite{bdh-res}, and give precise references instead.

\begin{definition}{(c.f.\ \cite[Definition 4.6]{bdh-scat})}  \label{Defen} For every fixed  numbers $\rho_0 \in (0,1)$ and  $ \rho \in (0, \min(1, e_1/4) )$  satisfying  \eqref{dorm2}, we define the sequences
\begin{align}
\rho_n := \rho_0 \rho^n,   \hspace{1cm} \epsilon_n :=  20  \rho_n^{1+ \mu/4}, \hspace{1cm} \forall n \in \mathbb{N}.
\end{align}
\end{definition}
\begin{lemma}
\label{lemma:thl12}
Set  $G\in \mathcal
C^\infty_c(\R \setminus\{0\},\C)$, $ n \in \N$ large enough and $\eta > 0$ small enough  such that  $G(x )= 0,$ for $  |x| \leq 2 ( \epsilon_n + \eta) $.   We define
\begin{align}
\label{eq:t_eps,R,eta}
 T_{n,R}(\eta):&= \int_{\Gamma_-(\epsilon_n,R)}  \mathrm{d}z\,   u(z)
  \int_{\R}\mathrm{d}r \, \frac{G(r)}{z-\lambda_0-r}\left( 1-\mathbbm1_{I_\eta(z)}(r) \right) ,
\end{align}
where $  \mathbbm1_{I_
\eta(z)} $ is the characteristic function of the set $I_\eta(z):=[z-\lambda_0-\eta,z-\lambda_0+\eta]$,  $\Gamma_-(\epsilon_n,R)$ is defined in \eqref{Gamma-parts}  and
\begin{align}
\label{eq:defu}
u: \overline{\C^+} \setminus \{\lambda_0\}\to\C, \qquad z \mapsto u(z):=\left\langle    \sigma_1 \Psi^{\overline \theta}_{\lambda_0},\left( H^\theta-z  \right)^{-1} \sigma_1 \Psi^\theta_{\lambda_0}\right\rangle  .
\end{align}
Then, for sufficiently large $R$ (independent of $n$ and $\theta \in \mathcal{S}$), there is a constant $C$ (that does not depend on $ n $, but it does depend on  $G$,  $\theta$, $e_1$ and $m$ -- see above \eqref{eq:defcone} below) such that
\begin{align}
\label{eq:t_eps,R,eta_lim}
\left| T_{n,R }(\eta) -  \pi i\int_{\R}\mathrm{d}r \, G(r)u(r+\lambda_0) \right|\leq
 C \left( \rho_n^{\mu/8} +\frac{1}{R}  +  \eta \right).
\end{align}
\end{lemma}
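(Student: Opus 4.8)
The plan is to analyze $T_{n,R}(\eta)$ by splitting the contour integral over $\Gamma_-(\epsilon_n,R)$ into the two flat pieces $[-R,\lambda_0-\epsilon_n]$ and $[\lambda_0+\epsilon_n,R]$, and to identify the limiting term $\pi i \int_\R \mathrm{d}r\, G(r) u(r+\lambda_0)$ as the contribution of the Cauchy-type singularity at $z = \lambda_0 + r$. Concretely, for fixed $z$ on the real axis (away from $\lambda_0$), the inner integral $\int_\R \mathrm{d}r\, \frac{G(r)}{z-\lambda_0-r}(1-\mathbbm 1_{I_\eta(z)}(r))$ is, up to the $\eta$-cutoff, a principal-value-type expression; as $\eta\to 0$ it converges to $\mathrm{PV}\int \frac{G(r)}{z-\lambda_0-r}\,\mathrm{d}r$, and the error from reinstating the symmetric interval $I_\eta(z)$ is $O(\eta)$ uniformly because $G$ is smooth and compactly supported. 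So the first step is to replace $(1-\mathbbm 1_{I_\eta(z)})$ by $1$ at the cost of $C\eta$, reducing the problem to controlling $\int_{\Gamma_-(\epsilon_n,R)} \mathrm{d}z\, u(z)\, (\mathrm{PV}\!\int \frac{G(r)\,\mathrm{d}r}{z-\lambda_0-r})$.

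Next I would bring in the smoothing properties of $u$ near the real axis coming from the multiscale analysis of \cite{bdh-res} (as already exploited in \cite[Section 4.3]{bdh-scat}): on a cone-shaped region the function $z\mapsto \langle \sigma_1\Psi^{\bar\theta}_{\lambda_0}, (H^\theta - z)^{-1}\sigma_1\Psi^\theta_{\lambda_0}\rangle$ extends to a Hölder-continuous function up to $\overline{\mathbb C^+}\setminus\{\lambda_0\}$ with a controlled modulus of continuity; the hypothesis $G(x)=0$ for $|x|\le 2(\epsilon_n+\eta)$ ensures that for $r$ in $\mathrm{supp}\,G$ and $z$ on $\Gamma_-(\epsilon_n,R)$ we stay away from the pole at $\lambda_0$ by a margin $\gtrsim \epsilon_n$, so all the relevant estimates are quantitatively uniform with a loss that is a positive power of $\rho_n$ (here $\rho_n^{\mu/8}$). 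Using Fubini to swap the $z$- and $r$-integrations, the key identity is that for fixed $r\in\mathrm{supp}\,G$,
\begin{align*}
\int_{\Gamma_-(\epsilon_n,R)}\mathrm{d}z\,\frac{u(z)}{z-\lambda_0-r}
= \pi i\, u(\lambda_0+r) + (\text{boundary terms at }\pm R) + (\text{error }O(\epsilon_n^{\mu/8})),
\end{align*}
which is the standard "half-residue" statement: closing the gap at $\lambda_0+r$ with a small semicircle in the upper half plane picks up half of $2\pi i\,u(\lambda_0+r)$, the large-$R$ endpoints contribute $O(1/R)$ because $u$ is bounded there and $1/(z-\lambda_0-r)$ decays, and the deviation between the boundary values of the analytic continuation of $u$ and the "naive" values is absorbed into the $\rho_n^{\mu/8}$ term via the Hölder bound and the width $\epsilon_n \sim \rho_n^{1+\mu/4}$ of the excised interval. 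Integrating against $G(r)$ then gives exactly $\pi i\int_\R \mathrm{d}r\, G(r) u(r+\lambda_0)$ plus the claimed error.

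The main obstacle is making the "half-residue" step rigorous and uniform in $n$ and $\theta$ simultaneously: $u$ is only defined a priori on $\overline{\mathbb C^+}\setminus\{\lambda_0\}$, not across the real axis, so one cannot literally deform the contour through $\lambda_0+r$. The correct device — following \cite[Section 4.3]{bdh-scat} and the cone estimates of \cite{bdh-res} — is to push the two flat segments slightly into the open upper half-plane, say to height $\delta$, where $(H^\theta - z)^{-1}$ is genuinely analytic, evaluate there, and then let $\delta\to 0$ using the existence of boundary values and their Hölder modulus of continuity; one must check that the extra semicircular arc of radius $\sim\epsilon_n$ around $\lambda_0+r$ contributes $\pi i\, u(\lambda_0+r)$ in the limit with an error controlled by (arc length)$\times$(oscillation of $u$ on the arc) $\lesssim \epsilon_n^{\mu/8}$, and that all constants depend only on $G$, $\theta$, $e_1$, and the multiscale parameter $m$ and not on $n$. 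Once the contour manipulation and the three error sources ($\eta$ from the cutoff, $1/R$ from the endpoints, $\rho_n^{\mu/8}$ from the singularity region) are assembled, the estimate \eqref{eq:t_eps,R,eta_lim} follows; the remaining calculations are routine bookkeeping with the Hölder bounds supplied by \cite{bdh-res}.
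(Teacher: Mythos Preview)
Your overall shape---swap the $z$- and $r$-integrals via Fubini, then for each fixed $r\in\operatorname{supp}G$ close the $z$-contour in the upper half plane and pick up a half-residue at $z=\lambda_0+r$---is exactly the paper's approach. But two points in your write-up are off and would cause trouble if you tried to carry them out as stated.

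First, you do not need any H\"older-continuity or boundary-value machinery. By the spectral information \eqref{spectrum}--\eqref{eq:impartres}, the only spectral point of $H^\theta$ in $\overline{\C^+}$ is $\lambda_0$ itself, so $u$ is genuinely \emph{analytic} on $\overline{\C^+}\setminus\{\lambda_0\}$, not merely H\"older up to the real axis. Since $r\in\operatorname{supp}G$ stays a fixed positive distance from $0$, the point $\lambda_0+r$ is a regular point for $u$, and the half-residue computation is a straightforward contour deformation: the error on the small semicircle $\Gamma_{(r)}$ of radius $\eta$ about $\lambda_0+r$ is $|u(z)-u(\lambda_0+r)|\le C\eta$ by ordinary smoothness, which accounts for the $O(\eta)$ term---not for the $\rho_n^{\mu/8}$ term.

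Second, and this is the real gap: you misidentify the origin of the $\rho_n^{\mu/8}$ error. It has nothing to do with oscillation of $u$ near $\lambda_0+r$. The contour $\Gamma_-(\epsilon_n,R)$ has a \emph{second} gap, namely $(\lambda_0-\epsilon_n,\lambda_0+\epsilon_n)$, and to form a closed contour in $\overline{\C^+}$ you must also include the small semicircle $\Gamma_c(\epsilon_n)$ about $\lambda_0$ (together with the large semicircle $\Gamma^{(R)}$). On $\Gamma_c(\epsilon_n)$ the factor $1/(z-\lambda_0-r)$ is harmlessly bounded (since $|r|\ge 2(\epsilon_n+\eta)$ on $\operatorname{supp}G$), but $u(z)$ blows up: the multiscale estimate \eqref{nn0} gives only $|u(z)|\le C\boldsymbol C^{n+1}/\rho_n$ there. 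Multiplying by the arc length $\pi\epsilon_n\sim\rho_n^{1+\mu/4}$ and using \eqref{dorm2} yields the bound $C\boldsymbol C^{n+1}\rho_n^{\mu/4}\le C\rho_n^{\mu/8}$. That is where the $\rho_n^{\mu/8}$ comes from; your proposal never mentions this semicircle. Finally, it is cleaner to \emph{keep} the $\eta$-cutoff rather than pass to a PV integral first: with the cutoff the integrand is bounded on $\Gamma_-(\epsilon_n,R)\times\R$, Fubini is immediate, and the excised $z$-interval $(\lambda_0+r-\eta,\lambda_0+r+\eta)$ is exactly the piece you replace by $\Gamma_{(r)}$ when closing the contour.
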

\begin{proof}
The integrand in \eqref{eq:t_eps,R,eta} is absolutely integrable with respect to the variables $z$ and $r$ because the singularity is cut off by  the characteristic function. We apply Fubini's theorem to get
\begin{align}\label{chucha1}
 T_{n,R} (\eta) &= \int_{\R}\mathrm{d}r \,  G(r) \int_{\Gamma_-(\epsilon_n ,R)}  \mathrm{d}z\,   u(z)
   \, \frac{1}{z-\lambda_0-r}\left( 1-\mathbbm1_{  I_\eta(z)}(r) \right) .
\end{align}
Next, we analyze the inner integral above for $r$ in the support of $G$.
Set $\Gamma_{(r)}$ the half circle in the upper half complex plane with
center $r + \lambda_0$ and radius $ \eta$.
Moreover, set $ \Gamma^{(R)} $ the  half-circle in the upper half complex
plane with center $0 $ and radius $R$.
\begin{figure}[h]
\centering
\includegraphics[width=0.6\textwidth]{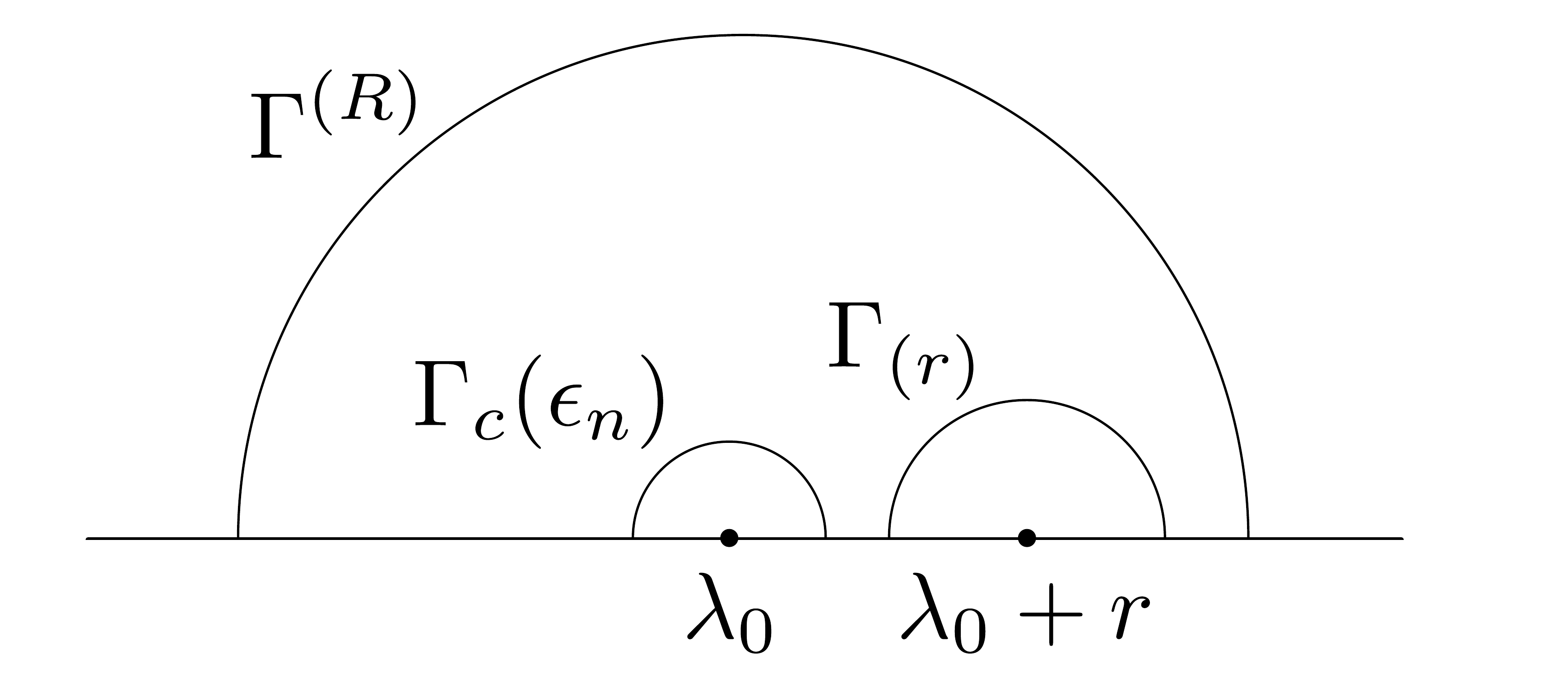}
  \caption{An illustration of the half circles $\Gamma_c(\epsilon_n)$ and $\Gamma_{(r)}$.}
    \label{fig:half-circles}
\end{figure}
As despicted in Figure \ref{fig:half-circles},
the two half circles $\Gamma_c(\epsilon_n)$ and $\Gamma_{(r)}$ do not
intersect each other for all $r$ in the support of $G$. This is a consequence of the assumption that the support of $G$ does not intersect with the interval
$(-2(\epsilon_n +\eta), 2(\epsilon_n +\eta))$. Moreover, we find that  both half circles $\Gamma_c(\epsilon_n)$ and $\Gamma_{(r)}$ are contained in $  \Gamma^{(R)}$ for large enough $R$
(the value of $R$ can be chosen uniformly with respect to $n$ and $\theta \in \mathcal{S}$, but it depends on the support of $G$  independent of $ n $ and $\theta \in \mathcal{S}$, but dependent on the support of $G$).

Note that there is a constant $C$ (that depends on the support of $G$, but not on $n$, $\theta \in S$, $\rho$ and $\rho_0$) such that (see \eqref{resA})
\begin{align}\label{chucha2}
\Big | u(z)  \, \frac{1}{z-\lambda_0-r} \Big | \leq \frac{C}{R^2}, \qquad \forall z \in   \Gamma^{(R)} .
\end{align}
Moreover,  there is a constant $ C $ (that depends on the support of $G$, but not on $n$,  $\rho$ and $\rho_0$) such that (see \eqref{nn0})
\begin{align}\label{chucha3}
\Big | u(z)  \, \frac{1}{z-\lambda_0-r} \Big | \leq C \boldsymbol{C}^{n+1} \frac{1}{\rho_n}, \hspace{2cm} \forall z \in   \Gamma_c(\epsilon_n),
\end{align}
where   $\rho_n=\rho_0\rho^n$ and $\rho_0>0$, $0<\rho<1$ and $\boldsymbol C>0$ are specific numbers defined in \cite[Definition 4.1 and 4.2]{bdh-res} and fulfilling \eqref{dorm2}.
We know from \eqref{spectrum} and \eqref{eq:impartres} that the only spectral point of $H^{\theta}$ in $\overline{\C^+}$ is $\lambda_0$. Hence, there is a constant  $C$ (that depends on the support of $G$, but not on $n$) such that
\begin{align}\label{chucha4}
|   u(z) - u( \lambda_0 + r  )  | \leq C \eta,   \qquad \forall z \in   \Gamma_{(r)} .
\end{align}
A direct calculation shows that
\begin{align}\label{chucha5}
\int_{\Gamma_{(r)}} dz \,  u(   \lambda_0 + r )  \, \frac{1}{z-\lambda_0-r} =
- u(   \lambda_0 + r )   i\pi .
\end{align}
We choose the contour which follows the following set of points
$ \Big ( \Gamma_-(\epsilon_n,R) $     $\setminus (r + \lambda_0 - \eta, r + \lambda_0 + \eta )  \Big ) $  $   \cup \Gamma^{(R)} $   $ \cup \Gamma(r) \cup  \Gamma_c(\epsilon_n)$
along the mathematical positive orientation. This is a closed contour where the function $z \mapsto  \frac{u(z)}{z-\lambda_0-r}  $ is continuous, and an it is analytic on its interior.
 Then, it follows from Cauchy's integral formula that (notice that, for $z$ in the real numbers,  $  \mathbbm1_{  I_\eta(z)  }(r)  = \mathbbm1_{ [r + \lambda_0 - \eta, r + \lambda_0 + \eta ]  }(z) $)
\begin{align}\label{chucha6}
\int_{\Gamma_-(\epsilon_n,R)}  \mathrm{d}z\,  \frac{u(z)}{z-\lambda_0-r}\left( 1-\mathbbm1_{  I_\eta(z)  }(r) \right)
=   &  \int_{\Gamma_-(\epsilon_n,R)}  \mathrm{d}z\, \frac{u(z)}{z-\lambda_0-r}\left( 1-    \mathbbm1_{ [r + \lambda_0 - \eta, r + \lambda_0 + \eta ]  }(z)  \right)  \notag
\\ =  &  \int_{\Gamma_-(\epsilon_n,R)  \setminus (r + \lambda_0 - \eta, r + \lambda_0 + \eta ) }  \mathrm{d}z\, \frac{u(z)}{z-\lambda_0-r}  \notag
 \\    = & -
   \int_{ \Gamma^{(R)} \cup \Gamma(r) \cup  \Gamma_c(\epsilon_n)   } dz\,
   \frac{u(z)}{z-\lambda_0-r},
\end{align}
which together with \eqref{chucha1}-\eqref{chucha5} imply the desired result, we additionally use  Definition \ref{Defen}  and \eqref{dorm2}   to estimate the integral over $  \Gamma_c(\epsilon_n)   $.
\end{proof}
\begin{lemma}
\label{lemma:uniform}
Let $n \geq 2$ and  $R>0$ be large enough.  For  $0<q<1<Q< \infty$ and $\zeta \in \mathit S(\R,\C)$, we define
\begin{align}
A(Q,n,R):=\int^{Q}_{q} \mathrm{d}s\,  \zeta(s) \int_{\Gamma_-(\epsilon_n,R)}  \mathrm{d}z\, e^{-is(z-\lambda_0)}
    \left\langle    \sigma_1 \Psi^{\overline \theta}_{\lambda_0},\left( H^\theta-z
    \right)^{-1} \sigma_1 \Psi^\theta_{\lambda_0}\right\rangle .
\end{align}
Then, the  limits $A(Q,\infty,\infty):=\lim\limits_{n,R\to\infty}A(Q,n,R)$ and $A(\infty,n,R):=\lim\limits_{Q\to\infty}A(Q,n,R)$ exist
and they are uniform with respect to $Q$ and $(n, R)$, respectively. Moreover,   there is a constant $C$ (independent of $n$, $q$, $Q$  and $R$)    such that
\begin{align}\label{Misto}
|A(Q,n,R)-A(\infty,n,R)|\leq C/Q.
\end{align}
Additionally, the limits
\begin{align}
\lim\limits_{Q\to\infty}\lim\limits_{n,R\to\infty}A(Q,n,R),        \hspace{1cm}  \lim\limits_{n,R\to\infty} A(\infty,n,R)
\end{align}
exist and they are equal.
\end{lemma}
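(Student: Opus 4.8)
The plan is to prove Lemma~\ref{lemma:uniform} by decomposing the $z$-contour $\Gamma_-(\epsilon_n,R)$ into its three natural pieces and tracking the $s$-oscillation. Write $\Gamma_-(\epsilon_n,R) = [-R,\lambda_0-\epsilon_n]\cup[\lambda_0+\epsilon_n,R]$; on this set $z$ is real, so the exponential $e^{-is(z-\lambda_0)}$ is purely oscillatory, and the resolvent matrix element $u(z)=\langle \sigma_1\Psi^{\overline\theta}_{\lambda_0},(H^\theta-z)^{-1}\sigma_1\Psi^\theta_{\lambda_0}\rangle$ is a smooth function of the real variable $z$ away from $\lambda_0$, with the decay $|u(z)|\leq C/|z|$ for large $|z|$ coming from the bound \eqref{resA} referenced in the previous lemma. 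The key structural point is that after interchanging the $s$- and $z$-integrals (justified by absolute integrability once the $\epsilon_n$-cutoff is in place) one can perform the $s$-integral first: $\int_q^Q \zeta(s)e^{-is(z-\lambda_0)}\,\mathrm ds$ is, up to boundary terms, essentially $\mathfrak F[\zeta\,\mathbbm 1_{[q,Q]}](z-\lambda_0)$, which for Schwartz $\zeta$ is uniformly bounded in $Q$ and decays rapidly in $z-\lambda_0$; this already yields the existence and uniformity of $A(\infty,n,R) := \lim_{Q\to\infty}A(Q,n,R)$ together with the rate \eqref{Misto}, since the tail $\int_Q^\infty\zeta(s)(\cdots)\,\mathrm ds$ is controlled by $\|\zeta\|$ on $[Q,\infty)$ against an $s$-integrable-in-$z$ kernel, giving the $C/Q$ bound.

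Next I would address the $(n,R)\to\infty$ limit for fixed $Q$. Here the harmless part is $R\to\infty$: the two rays go to $\pm\infty$ along the real axis, and since $|u(z)|\leq C/|z|$ and $\zeta$ is Schwartz, the contribution of $|z|>R$ is $O(1/R)$ uniformly in $n$ and $Q$ after the $z$-integration (one can also use the decay in $s$ of $\mathfrak F[\zeta]$). The part that needs care is removing the small arc, i.e. letting $\epsilon_n\to 0$: the integrand has a singularity of $u$ at $z=\lambda_0$. But by the spectral information recalled before Lemma~\ref{lemma:thl12} — that $\lambda_0$ is the only spectral point of $H^\theta$ in $\overline{\C^+}$, together with the multiscale bounds \eqref{nn0}/\eqref{chucha3} controlling $|u(z)|$ near $\lambda_0$ on $\Gamma_c(\epsilon_n)$ — the singularity of $u$ at $\lambda_0$ is at worst integrable against $\mathrm dz$ on the real line (morally $u(z)$ behaves no worse than $\log$ or a mild power, killed by $\epsilon_n^{1+\mu/4}$-type gains), so $\int_{-\epsilon_n}^{\epsilon_n}|u(\lambda_0+r)|\,\mathrm dr\to 0$ and hence $A(Q,n,R)$ converges as $n,R\to\infty$, uniformly in $Q$ on compacts — and in fact uniformly in $Q\in(1,\infty)$ by combining with the $s$-decay. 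This gives existence and uniformity of $A(Q,\infty,\infty):=\lim_{n,R\to\infty}A(Q,n,R)$.

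With both one-parameter limits established uniformly, the interchange of limits is a soft argument: uniformity of $\lim_{Q\to\infty}A(Q,n,R)$ in $(n,R)$ (from \eqref{Misto}, whose constant is visibly independent of $n,R$) together with existence of $\lim_{n,R\to\infty}A(Q,n,R)$ for each $Q$ lets one invoke the standard Moore–Osgood theorem to conclude that the iterated limits $\lim_{Q\to\infty}\lim_{n,R\to\infty}A(Q,n,R)$ and $\lim_{n,R\to\infty}\lim_{Q\to\infty}A(Q,n,R)=\lim_{n,R\to\infty}A(\infty,n,R)$ both exist and agree. I would present the proof in this order: (i) Fubini and the $s$-integral identity giving $A(\infty,n,R)$ and \eqref{Misto}; (ii) the $R\to\infty$ estimate; (iii) the $\epsilon_n\to 0$ estimate using the spectral/multiscale bounds; (iv) Moore–Osgood.

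The main obstacle I anticipate is step (iii): making precise, with uniformity in $Q$, that the near-$\lambda_0$ singularity of $u$ integrated against the oscillatory factor and $\zeta$ disappears in the limit $\epsilon_n\to 0$. This is exactly the place where the multiscale bounds of \cite{bdh-res} enter — the estimate \eqref{chucha3} shows $|u(z)|$ can grow like $\boldsymbol C^{n+1}/\rho_n$ on $\Gamma_c(\epsilon_n)$, but $\Gamma_c(\epsilon_n)$ has length $\sim\epsilon_n = 20\rho_n^{1+\mu/4}$, so the product carries a net gain $\rho_n^{\mu/4}(\boldsymbol C \rho)^{n}$ that vanishes precisely because of the parameter constraint \eqref{dorm2} forcing $\boldsymbol C\rho<1$; one must run the same bookkeeping on the flat pieces $[\lambda_0+\epsilon_n,\,\lambda_0+\delta]$ for small fixed $\delta$, where the integrable-singularity bound on $u$ must be extracted from the cited spectral estimates. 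Once that quantitative input is in hand the rest is routine.
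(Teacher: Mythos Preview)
Your overall architecture---establish uniform convergence in $Q$ (with the rate $C/Q$), establish uniform convergence in $(n,R)$, then exchange limits via Moore--Osgood---matches the paper's proof, and you correctly identify the near-$\lambda_0$ multiscale bookkeeping as the crux. However, step (i) as written has a genuine gap.

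The specific claim that $\mathfrak{F}[\zeta\,\mathbbm 1_{[q,Q]}](z-\lambda_0)$ ``decays rapidly in $z-\lambda_0$'' is false: the sharp cutoff at $s=q,Q$ destroys Schwartz decay, and a single integration by parts gives only $O(1/|z-\lambda_0|)$. That $1/|z-\lambda_0|$ is exactly what you need to make the $z$-integral absolutely convergent for large $|z|$ (since $|u(z)|\leq C/|z|$ by \eqref{resA}), but the \emph{same} factor is fatal near $\lambda_0$: on the annuli $I_{j,j+1}$ the combination $|u(z)|/|z-\lambda_0|$ contributes terms of order $\boldsymbol{C}^{j}\epsilon_j/(\rho_{j+1}\epsilon_{j+1})\sim \boldsymbol{C}^{j}/\rho_{j+1}$, which does \emph{not} sum under \eqref{dorm2}. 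Conversely, the crude bound $|\int_Q^\infty\zeta(s)e^{-is(z-\lambda_0)}\,\mathrm ds|\leq\int_Q^\infty|\zeta|\leq C/Q$ (no $1/|z-\lambda_0|$) works perfectly near $\lambda_0$---it yields exactly the summable series $\sum_j \boldsymbol{C}^{j+2}\epsilon_j/\rho_{j+1}$ you describe---but then $\int|u(z)|\,\mathrm dz$ diverges logarithmically at infinity and you lose uniformity in $R$.

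The fix, which is what the paper does, is to split the contour once and for all at a \emph{fixed} scale: write $\Gamma_-(\epsilon_n,R)=I_1\cup I_n$ with $I_1=[-R,R]\setminus(\lambda_0-\epsilon_1,\lambda_0+\epsilon_1)$ and $I_n=[\lambda_0-\epsilon_1,\lambda_0+\epsilon_1]\setminus(\lambda_0-\epsilon_n,\lambda_0+\epsilon_n)$. On $I_1$ (far part, depends only on $R$) use integration by parts in $s$ to produce the $1/(z-\lambda_0)$ factor and get $C/Q$ uniformly in $R$; on $I_n$ (near part, depends only on $n$) use the direct Schwartz-tail bound and the multiscale $L^1$-estimate on $u$ to get $C/Q$ uniformly in $n$. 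This same split then handles your steps (ii) and (iii) separately and cleanly. Once you insert this split into step (i), the rest of your outline---including the Moore--Osgood conclusion---goes through exactly as you wrote it.
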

\begin{proof}
For $0<q<Q<\infty$, $n \in \N $ and $ R\in \R^+$ sufficiently large, we write
\begin{align}
\label{asplit0}
A(Q,n,R)&=A^{(1)}(Q,n,R)+A^{(2)}(Q,n,R) ,
\end{align}
where
\begin{align}
\label{asplit1}
A^{(1)}(Q,n)&:=\int^{Q}_{q} \mathrm{d}s \, \zeta(s)\int_{I_n}  \mathrm{d}z\, e^{-is(z-\lambda_0)}
    \left\langle    \sigma_1 \Psi^{\overline \theta}_{\lambda_0},\left( H^\theta-z
    \right)^{-1} \sigma_1 \Psi^\theta_{\lambda_0}\right\rangle ,
    \\
    \label{asplit2}
 A^{(2)}(Q,R)  &:= \int^{Q}_{q} \mathrm{d}s \,\zeta(s) \int_{I_1 }  \mathrm{d}z\, e^{-is(z-\lambda_0)}
    \left\langle    \sigma_1 \Psi^{\overline \theta}_{\lambda_0},\left( H^\theta-z
    \right)^{-1} \sigma_1 \Psi^\theta_{\lambda_0}\right\rangle .
\end{align}
 Here, we split the  the domain of integration $\Gamma_-(\epsilon_n,R)=I_1 \cup I_n$, where $I_1:= [-R,R]\setminus (\lambda_0-\epsilon_1, \lambda_0 +\epsilon_1)$ and $I_n:= [\lambda_0-\epsilon_1, \lambda_0 +\epsilon_1]\setminus (\lambda_0-\epsilon_n, \lambda_0 +\epsilon_n)$. We analyze first \eqref{asplit2}.
 We  obtain from the integration by parts formula (in the variable $s$) together with $e^{-is(z-\lambda_0)}=i(z-\lambda_0)^{-1}\partial_s e^{-is(z-\lambda_0)}$ that there is a constant $C$ such that, for $\tilde Q>Q$,
\begin{align}
&A^{(2)}(\tilde Q,R) - A^{(2)}(Q,R)
\notag \\
&=i \int^{\tilde Q}_{Q} \mathrm{d}s \,\zeta(s) \int_{I_1}  \mathrm{d}z\, (z-\lambda_0)^{-1}\partial_s e^{-is(z-\lambda_0)}
    \left\langle    \sigma_1 \Psi^{\overline \theta}_{\lambda_0},\left( H^\theta-z
    \right)^{-1} \sigma_1 \Psi^\theta_{\lambda_0}\right\rangle .
\notag \\
&= i\int_{I_1}   \mathrm{d}z\, \left( \zeta(\tilde Q)e^{-i\tilde Q(z-\lambda_0)}- \zeta(Q)e^{-iQ (z-\lambda_0)}\right)(z-\lambda_0)^{-1}
    \left\langle    \sigma_1 \Psi^{\overline \theta}_{\lambda_0},\left( H^\theta-z
    \right)^{-1} \sigma_1 \Psi^\theta_{\lambda_0}\right\rangle
    \notag \\
&-i\int^{\tilde Q}_{Q} \mathrm{d}s \,\left( \partial_s \zeta(s)\right)
\int_{I_1}  \mathrm{d}z\, (z-\lambda_0)^{-1} e^{-is(z-\lambda_0)}
    \left\langle    \sigma_1 \Psi^{\overline \theta}_{\lambda_0},\left( H^\theta-z
    \right)^{-1} \sigma_1 \Psi^\theta_{\lambda_0}\right\rangle .
\end{align}
Since  $\zeta\in \mathit S(\R,\C)$, there is a constant $C$ such that, for all $s\in \R$,
$|\zeta(s)|,|\partial_s \zeta(s)|\leq C/(1+s^2)$, and hence, there is a constant $C$ such that
\begin{align}
\label{A2'}
\left|A^{(2)}(\tilde Q,R) - A^{(2)}(Q,R)  \right| &\leq  C Q^{-1}
\int_{I_1}  \mathrm{d}z\, \left| z-\lambda_0  \right|^{-1}
\left|    \left\langle    \sigma_1 \Psi^{\overline \theta}_{\lambda_0},\left( H^\theta-z
    \right)^{-1} \sigma_1 \Psi^\theta_{\lambda_0}\right\rangle\right| .
\end{align}
It follows from \eqref{resA} and \eqref{nn0} that there is a constant $C$   (independent of $n$, $R$, $q$ and $Q$) such that
\begin{align}
\label{A2}
\left| A^{(2)}(\tilde Q,R) - A^{(2)}(Q,R)  \right| &\leq  C /Q.
\end{align}
Similarly,  using that $\zeta\in \mathit S(\R,\C)$,
 we find  a constant $C$  (independent of $n$, $R$, $q$ and $Q$) such that
\begin{align}
&\left| A^{(1)}(\tilde Q,n) - A^{(1)}(Q,n)  \right| \leq CQ^{-1} \int_{I_n}  \mathrm{d}z\, \left|    \left\langle    \sigma_1 \Psi^{\overline \theta}_{\lambda_0},\left( H^\theta-z
    \right)^{-1} \sigma_1 \Psi^\theta_{\lambda_0}\right\rangle\right|
    \notag \\
    &\leq C Q^{-1}\sum^{n-1}_{j=1}\int_{I_{j,j+1}}  \mathrm{d}z\, \left|    \left\langle    \sigma_1 \Psi^{\overline \theta}_{\lambda_0},\left( H^\theta-z
    \right)^{-1} \sigma_1 \Psi^\theta_{\lambda_0}\right\rangle\right| ,
\end{align}
where  $I_{j,j+1}:= [\lambda_0-\epsilon_j, \lambda_0 +\epsilon_j]\setminus (\lambda_0-\epsilon_{j+1}, \lambda_0 +\epsilon_{j+1})$.
 We observe from \eqref{nn0} together with Definition \ref{Defen}  that there is a constant $C$  (independent of $n$, $R$, $q$ and $Q$) such that
\begin{align}
\label{A1}
\left| A^{(1)}(\tilde Q,n) - A^{(1)}(Q,n)  \right|
    &\leq CQ^{-1}\sum^{\infty}_{j=1   }\int_{I_{j,j+1}}  \mathrm{d}z\, \frac{\boldsymbol C^{j+2} }{\rho_{j+1}}
    \leq C Q^{-1}\sum^{\infty}_{j=1   }\frac{\boldsymbol C^{j+2} \epsilon_j}{\rho_{j+1}}
    .
\end{align}
From Definition \ref{Defen}  and \eqref{dorm2}, we obtain that
\begin{align}
\left| A^{(1)}(\tilde Q,n) - A^{(1)}(Q,n)  \right| \leq C/Q .
\end{align}
This together with \eqref{A2} implies that there is a constant $C$ such that
\begin{align}
\left| A(\tilde Q,n,R) - A(Q,n,R)  \right| \leq C/Q .
\end{align}
Consequently,       the limit  $ \lim_{\tilde Q \to \infty}    A(\tilde Q,n,R) $ exists and it converges    uniformly with respect to $ n $ and $R$.      We denote the limit by $A(\infty,n,R)=\lim\limits_{Q\to\infty}A(Q,n,R)$.     It follows that \eqref{Misto} holds true.

For fixed $Q$ and $\tilde n >n $ and $\tilde R>R$, we have
\begin{align}
\label{exlim0}
\left| A( Q,\tilde n, \tilde R) - A(Q,n,R)  \right| \leq \left| A( Q,\tilde n, \tilde R) - A(Q,\tilde n,R)  \right| + \left| A( Q,\tilde n,  R) - A(Q,n,R)  \right|
 .
\end{align}
For $\tilde n  $ and $\tilde R$ large enough, employing a similar calculation as in \eqref{A2'}, we get from
 \eqref{asplit0}, \eqref{asplit1}, \eqref{asplit2}   that  there is a constant $C$ (that does not depend on $Q$)   such that
\begin{align}
\label{exlim1}
&\left| A( Q,\tilde n, \tilde R) - A(Q,\tilde n,R)  \right| =  \left| A^{(2)}( Q, \tilde R) - A^{(2)}(Q, R)  \right|
\notag \\
&\leq C' \int_{[-\tilde R,-R]\cup [ R,\tilde R]}  \mathrm{d}z\, \left| z-\lambda_0  \right|^{-1}
\left|    \left\langle    \sigma_1 \Psi^{\overline \theta}_{\lambda_0},\left( H^\theta-z
    \right)^{-1} \sigma_1 \Psi^\theta_{\lambda_0}\right\rangle\right| \leq C /R ,
\end{align}
and furthermore, similarly as in \eqref{A1}, we obtain that there is a constant $C$ such that
\begin{align}
 \left| A( Q,\tilde n,  R) - A(Q,n,R)  \right|  &=  \left| A^{(1)}( Q,\tilde n) - A^{(1)}(Q,n)  \right|
    \leq C \sum^{\tilde n-1}_{j=n}\frac{\boldsymbol C^{j+2} \epsilon_j}{\rho_{j+1}}
 ,
\end{align}
and consequently, it follows from Definition \ref{Defen} together with \eqref{dorm2}  that there is a constant $C$
   (that does not depend on $Q$)
 such that
\begin{align}
\label{exlim2}
 \left| A( Q,\tilde n,  R) - A(Q,n,R)  \right|  &
    \leq C /n
 .
\end{align}
This together with \eqref{exlim0} and \eqref{exlim1} yields that there there is a constant $C$    (that does not depend on $Q$)    such that
\begin{align}
\label{ex1000}
\left| A( Q,\tilde n, \tilde R) - A(Q, n,R)  \right|  \leq C(R^{-1}+n^{-1}).
\end{align}
We conclude that
 the limit $A(Q,\infty,\infty):=\lim\limits_{n,R\to\infty}A(Q,n,R)$ exists (uniformly with respect to $Q$). This completes the first part of the lemma.

   Now we prove the second part of the  lemma.
At first, we show   the existence of the limit $\lim\limits_{n,R\to \infty }A(\infty ,n,R)$. For $\tilde n > n$ and $\tilde R>R$, we estimate
\begin{align}
\label{exlim00}
&\left| A(\infty,\tilde n, \tilde R ) - A(\infty ,n,R)  \right|
 \\\notag
&\leq
 \left| A(\infty,\tilde n, \tilde R  ) - A( Q,\tilde n, \tilde R )  \right|
 +\left| A(Q,\tilde n, \tilde R ) - A(Q,n ,R)  \right|
 +\left| A( Q,n, R ) - A(\infty ,n,R)  \right|
 .
\end{align}
For $\epsilon>0$, we take  $Q_0 >0$  such that  for all  $  Q \geq Q_0  $
\begin{align}
\label{epsdrittel}
 \left| A( \infty, \tilde n,\tilde R ) - A( Q,\tilde n ,\tilde R)  \right| \leq \epsilon /3 \quad \text{and} \quad  \left| A( \infty,  n, R ) - A( Q, n , R)  \right| \leq \epsilon /3 .
\end{align}
We obtain from     \eqref{ex1000}    that, for  $\epsilon>0$, there are constants $n_0,R_0 >0$  such that,   for all $n, \tilde n      \geq n_0$ and $R,  \tilde R     \geq R_0$,
\begin{align}
\label{epsdrittel'}
\left| A(Q,\tilde n, \tilde R ) - A(Q,n ,R)  \right| \leq \epsilon/3 .
\end{align}
This together with \eqref{epsdrittel} and \eqref{exlim00} yields that, for  $\epsilon>0$,    there are   $n_0>0$ and $R_0>0$ such that, for  $n\geq n_0$ and $R\geq R_0$, we have
\begin{align}
\label{exlim00'}
&\left| A(\infty,\tilde n, \tilde R ) - A(\infty ,n,R)  \right|   \leq \epsilon .
\end{align}
This implies the existence of the limit $\lim\limits_{n,R\to \infty}A(\infty ,n,R)    =: A(  \infty, \infty, \infty    )       $.
  We fix $ \epsilon  > 0 $.
According to \eqref{exlim00'} we obtain that for large enough $  n, R $,   $    |        A(  \infty, \infty, \infty    )     -      A(  \infty, \ n ,  R    )       | < \epsilon /3  $.
Since  $ \lim_{Q \to \infty } A( Q, n , R) =    A(\infty , n , R)   $ uniformly with respect to $ n, R $, then for large enough $ Q $  (independently of $n, R$)
$    |    A(\infty,  n,   R  )  -A(Q,  n ,  R)  | < \epsilon/3  $. Moreover,  because  $A(Q,\infty,\infty) =\lim\limits_{n,R\to\infty}A(Q,n,R)$ (uniformly with respect to $Q$), for large enough $   n, R $ (independently of $Q $) we have that  $    | A(Q,n ,R) -   A(Q ,\infty,\infty)  |  < \epsilon/3  $.  We conclude that
 there are  $\boldsymbol{n} \in \mathbb{N}$, $  \boldsymbol{R}   >0$ and $ \boldsymbol{ Q } >0$ such that, for  $n\geq \boldsymbol{ n}$, $Q \geq \boldsymbol{ Q } $  and  $R\geq  \boldsymbol {R }$, we have
\begin{align}
 \left|    A(\infty,  \infty,   \infty  ) -  A(Q ,\infty,\infty)   \right|   \leq & | A(\infty , \infty, \infty ) -  A(\infty,  n,   R  )| + |    A(\infty,  n,   R  )  -A(Q,  n ,  R)  |
\notag \\   & +  | A(Q,n ,R) -   A(Q ,\infty,\infty)  |    < \epsilon .
\end{align}
This proves that $  \lim_{Q \to \infty     }    A(Q ,\infty,\infty)  =    A(\infty,  \infty,   \infty  )  $ and
completes the proof of the second part of the lemma.
\end{proof}

\begin{remark}
The absolute value of the  integrand in the definition of  $A(Q, n, R)$ in Lemma \ref{lemma:uniform} is
\begin{align}\label{Nec}
 |\zeta(s) |
   \Big | \left\langle    \sigma_1 \Psi^{\overline \theta}_{\lambda_0},\left( H^\theta-z
    \right)^{-1} \sigma_1 \Psi^\theta_{\lambda_0}\right\rangle \Big |,
    \end{align}
and since the norm of the  resolvent operator behaves as $\Big |  1/z \Big | $
for large $|z|$, it is expected that the integral of \eqref{Nec} over $ \Gamma_-(\epsilon_n,R) $ diverges as $R$ tends to infinity.  A uniform bound of the from
\eqref{Misto} is possible because the oscillatory factor $  e^{-is (z - \lambda_0)} $is being integrated: we treat $ A(Q, n, R)  $ as an oscillatory integral, and use the usual tools from this area (we use a clever division of the   integration domain, apply integration by parts in different forms and interchange orders of integration). This is only possible if the variable $s$ is integrated (otherwise we loose the power of the oscillatory factor and we cannot perform integration by parts in the way we do). This is the reason why do not differentiate with respect to $Q$ and utilize the fundamental theorem of calculus (which is called Cook method in the context of scattering theory), since the derivative of $  A(Q, n, R) $ with respect to $Q$ does not contain an integration with respect to $s$.
\end{remark}

\subsection{Proof of Theorem~\ref{FK}}
\label{sec:proof-mainthm}

\begin{proof}[Proof of Theorem \ref{FK}]
Let $h,l\in\mathfrak h_0$; see Definition \ref{defasymptop}.
    Recall  the definition of $W$ given in \eqref{def:W1st} and the form factor
$f$ in \eqref{eq:f}.
Thanks to the fact that $f\in \mathit C^\infty(\R^3\setminus \{ 0\},\C)$,
we find that
\begin{align}
\label{eq:hflf}
hf,lf, W\in\mathfrak{h}_0.
\end{align}
Theorem~\ref{intker}, i.e.,  Equation~\eqref{eq:Tprecise1}  together with \eqref{gsanni} yields
\begin{align}
T(h,l)
 &=-2\pi i g\norm{\Psi_{\lambda_0}}^{-2} \left\langle a_-(W) \sigma_1 \Psi_{\lambda_0}, \Psi_{\lambda_0}\right\rangle
=-2\pi i g\norm{\Psi_{\lambda_0}}^{-2} \left\langle [a_-(W) ,\sigma_1] \Psi_{\lambda_0}, \Psi_{\lambda_0}\right\rangle ,
\end{align}
and
furthermore, recalling that $\omega(k)=|k|$,
and \eqref{a_-},  we obtain that
\begin{align}
\label{eq:thl0}
T(h,l)
&=
-2\pi(ig)^2\norm{\Psi_{\lambda_0}}^{-2} \int_{-\infty}^0 \mathrm{d}s   \,
\overline{\langle W_s,f\rangle_2}
\left\langle \left[e^{isH}\sigma_1
e^{-isH},  \sigma_1\right]  \Psi_{\lambda_0}, \Psi_{\lambda_0}\right\rangle
\notag \\
&=2\pi  g^2\norm{\Psi_{\lambda_0}}^{-2} \int^{\infty}_0 \mathrm{d}s \,
    \langle f, W_{-s}\rangle_2
    \left\langle \left[e^{-isH}\sigma_1
e^{isH},  \sigma_1\right]  \Psi_{\lambda_0}, \Psi_{\lambda_0}\right\rangle
\notag \\
&=i g^2 \norm{\Psi_{\lambda_0}}^{-2} \left(  T^{(1)}-T^{(2)} \right)
,
\end{align}
where  we use the abbreviations
\begin{align}
\label{eq:abr.terms}
T^{(j)}:=\lim\limits_{q\to 0^+} \lim\limits_{Q \to \infty}  T^{(j), q, Q}
\end{align}
 for $j=1,2$ with
\begin{align}
\label{eq:first-term}
T^{(1), q, Q}:&=-2\pi i\int^{Q}_{q} \mathrm{d}s  \int \mathrm{d^3} k   \,
W(k)f(k) e^{is(|k|+\lambda_0)}  \left\langle   \sigma_1 \Psi_{\lambda_0},
e^{-isH} \sigma_1\Psi_{\lambda_0}\right\rangle
\notag \\
&=      -2\pi i           \int^{Q}_{q} \mathrm{d}s  \int \mathrm{d} r   \,
G(r) e^{is(r+\lambda_0)}  \left\langle   \sigma_1 \Psi_{\lambda_0},
e^{-isH} \sigma_1\Psi_{\lambda_0}\right\rangle
\end{align}
and
\begin{align}
T^{(2), q, Q} :=-2\pi i\int^{Q}_{q} \mathrm{d}s  \int \mathrm{d} r   \,
G(r)e^{is(r-\lambda_0)}     \left\langle \sigma_1  \Psi_{\lambda_0},
e^{isH}\sigma_1\Psi_{\lambda_0}\right\rangle .
\label{eq:second-term}
\end{align}
Here, we use the notation
        \begin{align}
            \label{eq:G-def}
            G: \R \to \C , \qquad r \mapsto G(r):=
            \begin{cases}
                \int \mathrm{d}\Sigma \mathrm{d}\Sigma' \,  r^4  \overline{h(r,\Sigma)} l(r,\Sigma') f(r)^2  \qquad &\text{for} \quad r\geq 0
                \\
                0 \quad &\text{for} \quad r<0 ,
            \end{cases}
        \end{align}
            where we write spherical coordinates  $k=(r,\Sigma)$ and
        $k'=(r',\Sigma')$    in  \eqref{intker}  and \eqref{def:W1st}
         recalling the definition of $W$  and that
        $f(k)\equiv f(|k|)$  only depends on the radial coordinate  $r=|k|$.
         Thanks to  \eqref{eq:hflf}, we observe
        \begin{align}
            \label{eq:G-compact}
            &G\in \mathit C^\infty_c(\R\setminus \{ 0 \},\C) \subset \mathcal
                      S(\R ,\C)  .
        \end{align}

\paragraph{Term $T^{(1), q, Q}$:}
\cite[Theorem 2.3]{bdh-res}
guarantees  that $\Psi_{\lambda_0}$, and therefore, also $\sigma_1
\Psi_{\lambda_0}$ is an analytic vector (see Definition \ref{def:dil}). As pointed out earlier,  for the ground state, we can take the set $\mathcal{S}$ to be a neighborhood of $0$ which allows us to apply  Lemma \ref{laplace} and find
\begin{align}
\label{eq:thl1}
 T^{(1), q, Q} &=    -     \int^{Q}_{q} \mathrm{d}s  \int \mathrm{d}r  \,G(r)
 e^{is(r+\lambda_0)} \int_{\Gamma(\epsilon_n,R)}  \mathrm{d}z\, e^{-isz}
 \left\langle    \sigma_1 \Psi^{\overline \theta}_{\lambda_0},\left( H^\theta-z
 \right)^{-1} \sigma_1 \Psi^\theta_{\lambda_0}\right\rangle
 .
\end{align}
Here, $\Gamma(\epsilon_n,R)=\Gamma_{-}(\epsilon_n,R)\cup \Gamma_{c}(\epsilon_n)\cup
\Gamma_{d}(R)$ is the contour defined in Lemma \ref{laplace}, i.e.,
\eqref{Gamma-parts}, for sufficiently large $R>0$ and  $n > 2$.
We split the term
\begin{align}
    \label{eq:thl11}
    T^{(1), q, Q}
    &=T_{\epsilon_n,R}^{(1), q, Q}+T_{\epsilon_n}^{(1), q, Q}+T_R^{(1), q, Q}
\end{align}
 according to the different contours parts, see
\eqref{Gamma-parts}, in the $\mathrm d z$-integrals:
\begin{align}
    T_{\epsilon_n,R}^{(1), q, Q}:&=
-  \int^{Q}_{q} \mathrm{d}s \, J(s)  \int_{\Gamma_-(\epsilon_n,R)}  \mathrm{d}z\, e^{-isz}
    \left\langle    \sigma_1 \Psi^{\overline \theta}_{\lambda_0},\left( H^\theta-z
    \right)^{-1} \sigma_1 \Psi^\theta_{\lambda_0}\right\rangle
    ,
    \label{eq:thl11epsR}
    \\
    T_{\epsilon_n}^{(1), q, Q}:&=
 -    \int^{Q}_{q}\, J(s)  \int_{\Gamma_c(\epsilon_n)}  \mathrm{d}z\, e^{-isz}
    \left\langle    \sigma_1 \Psi^{\overline \theta}_{\lambda_0},\left( H^\theta-z
    \right)^{-1} \sigma_1 \Psi^\theta_{\lambda_0}\right\rangle
    \label{eq:thl11eps}
   ,
    \\
    T_R^{(1), q, Q}:&=
 - \int^{Q}_{q} \mathrm{d}s \, J(s)   \int_{\Gamma_d(R)}  \mathrm{d}z\, e^{-isz}
    \left\langle    \sigma_1 \Psi^{\overline \theta}_{\lambda_0},\left( H^\theta-z
    \right)^{-1} \sigma_1 \Psi^\theta_{\lambda_0}\right\rangle ,
    \label{eq:thl11R}
\end{align}
and we use the definition
\begin{align}
\label{def:J}
J:  \R \to \C , \qquad s\mapsto J(s)= \int \mathrm{d} r  \,G(r) e^{is(r+\lambda_0)}.
\end{align}
We observe that, thanks to \eqref{eq:G-compact},  we have  $J\in \mathit S(\R,\C)$ which implies
       \begin{align}
    \left| J(s) \right|       &\leq C       (1 + |s|^2)^{-1}
            \label{eq:int-bound-a}
        \end{align}
        for some constant $C$.
Moreover, we have    (see \eqref{resA})
    \begin{align}
            \left| e^{-isz}\left\langle    \sigma_1 \Psi^{\overline
                    \theta}_{\lambda_0},\left( H^\theta-z
            \right)^{-1} \sigma_1 \Psi^\theta_{\lambda_0}\right\rangle\right|
            &\leq  C  \|\Psi_{\lambda_0}\|^2
            \frac{e^{s\Im z}}{| z - e_1|}
           , \qquad \forall z \in \Gamma_d(R).
            \label{eq:int-bound-b}
        \end{align}

\paragraph{Contribution $T_{\epsilon_n}^{(1), q, Q}$ in \eqref{eq:thl11eps}:}
Using \eqref{eq:int-bound-a}, we may start with
the bound
\begin{align}
\label{eq:finaleps}
 |T_{\epsilon_n}^{(1), q, Q}|
\leq C \sup_{s\in [q,Q]}
   \left|  \int_{\Gamma_c(\epsilon_n)}   \mathrm{d}z\, e^{-isz}
    \left\langle    \sigma_1 \Psi^{\overline \theta}_{\lambda_0},\left( H^\theta-z
    \right)^{-1} \sigma_1 \Psi^\theta_{\lambda_0}\right\rangle\right| .
\end{align}
It  follows  from \eqref{nn0}   together with Definition \ref{Defen}  that
there is a  constant $C$ such that, for $s\in [q,Q]$, we have
\begin{align}
    \left|\int_{\Gamma_c(\epsilon_n)}  \mathrm{d}z\, e^{-isz}
      \left\langle
        \sigma_1 \Psi^{\overline \theta}_{\lambda_0},\left( H^\theta-z  \right)^{-1}
    \sigma_1 \Psi^\theta_{\lambda_0}\right\rangle\right| \leq Ce^{\epsilon_n Q}
    \frac{\epsilon_n}{\rho_n}  \boldsymbol{C}^{n+1} \leq C e^{\epsilon_n Q} \rho_n^{\mu/ 8} ,
\end{align}
where we use \eqref{dorm2}.
In conclusion, we have that, for all $0<q<Q<\infty$,
\begin{align}
\label{eq:contr1}
\lim\limits_{n\to 0} T_{\epsilon_n}^{(1), q,Q}=0.
\end{align}

\paragraph{Contribution $T_R^{(1),q,Q}$ in \eqref{eq:thl11R}:}
Using \eqref{eq:int-bound-a} again, we find
\begin{align}
|T_{R}^{(1),q,Q}| \leq C \int_{q}^{Q}\mathrm{d}s\, \frac{1}{1 + |s|^2}
   \left| \int_{\Gamma_d(R)}  \mathrm{d}z\,  e^{-isz}
    \left\langle    \sigma_1 \Psi^{\overline \theta}_{\lambda_0},\left( H^\theta-z
    \right)^{-1} \sigma_1 \Psi^\theta_{\lambda_0}\right\rangle\right|
    .
    \label{eq:lest}
\end{align}
For $s\in [q,Q]$, we observe that there is a constant $C$ such that      (see \eqref{resA})
\begin{align} \label{aver}
\left| \int_{\Gamma_d(R)}  \mathrm{d}z\, e^{-isz}  \left\langle    \sigma_1 \Psi^{\overline \theta}_{\lambda_0},\left( H^\theta-z  \right)^{-1} \sigma_1 \Psi^\theta_{\lambda_0}\right\rangle\right|
\leq \frac{C}{R} \int_0^\infty \mathrm{d}u \,  e^{-su \sin
    (\nu/4)} .
\end{align}
     Thereby, as in \eqref{aver}, we obtain the estimate
\begin{align}\label{aver1}
    &\lim_{R \to \infty} \int_{q}^{Q}\mathrm{d}s\, \frac{1}{1 + |s|^2}
    \int_{\Gamma_d(R)}  \mathrm{d}z\, \left| e^{-isz}
    \left\langle    \sigma_1 \Psi^{\overline \theta}_{\lambda_0},\left( H^\theta-z
    \right)^{-1} \sigma_1 \Psi^\theta_{\lambda_0}\right\rangle\right|
    \notag\\
    &\leq
   \lim_{R \to \infty } \frac{C}{R} \int_{q}^{Q}\mathrm{d}s\, \frac{1}{1 + |s|^2}\frac{1}{|s|} = 0.
\end{align}
 Then, we conclude   for all $0<q<Q<\infty$
 \begin{align}
 \label{eq:contr2}
 \lim\limits_{R\to\infty}  T_R^{(1),q,Q} =0 .
\end{align}
This together with \eqref{eq:contr1} and \eqref{eq:thl11} yields that for all $0<q<Q<\infty$
\begin{align}
T^{(1), q, Q} =\lim\limits_{n,R\to\infty}T_{\epsilon_n,R}^{(1), q, Q} .
\end{align}
Note that $J\in\mathit S(\R,\C)$.  Therefore, we are in the position to apply
Lemma \ref{lemma:uniform} and find
\begin{align}
\label{fincontr1}
T^{(1), q, \infty} :=\lim\limits_{Q\to\infty}T^{(1), q, Q}  =    \lim_{Q\to \infty} \lim_{  n, R   \to \infty }  T_{\epsilon_n,R}^{(1), q, Q }        =\lim\limits_{n,R\to\infty}T_{\epsilon_n,R}^{(1), q, \infty} ,
\end{align}
where
\begin{align}
& T_{\epsilon_n,R}^{(1), q, \infty}:=\lim\limits_{Q\to\infty}T_{\epsilon_n,R}^{(1), q, Q}
= - \int^{\infty}_{q} \mathrm{d}s  \, J(s)   \int_{\Gamma_-(\epsilon_n,R)}  \mathrm{d}z\, e^{-isz}
    \left\langle    \sigma_1 \Psi^{\overline \theta}_{\lambda_0},\left( H^\theta-z
    \right)^{-1} \sigma_1 \Psi^\theta_{\lambda_0}\right\rangle .
\end{align}
 For fixed $n$ and $R$, the function $z   \mapsto e^{-isz}
    \left\langle    \sigma_1 \Psi^{\overline \theta}_{\lambda_0},\left( H^\theta-z
    \right)^{-1} \sigma_1 \Psi^\theta_{\lambda_0}\right\rangle$ is bounded   in $   \Gamma_-(\epsilon_n,R)  $.
 Then, thanks to \eqref{eq:int-bound-a}, we may apply Fubini's theorem and find:
 \begin{align}
 \label{eq.thl1-final1}
& T_{\epsilon_n,R}^{(1), q, \infty}
=-\int_{\Gamma_-(\epsilon_n,R)}  \mathrm{d}z\,
    \left\langle    \sigma_1 \Psi^{\overline \theta}_{\lambda_0},\left( H^\theta-z
    \right)^{-1} \sigma_1 \Psi^\theta_{\lambda_0}\right\rangle  \int^{\infty}_{q} \mathrm{d}s  \int \mathrm{d} r  \,G(r)
    e^{is(r+\lambda_0-z)}
    \notag \\
    &=-\int_{\Gamma_-(\epsilon_n,R)}  \mathrm{d}z\,
    \left\langle    \sigma_1 \Psi^{\overline \theta}_{\lambda_0},\left( H^\theta-z
    \right)^{-1} \sigma_1 \Psi^\theta_{\lambda_0}\right\rangle  \int \mathrm{d}s \, \Theta (s-q)  \int \mathrm{d} r  \,G^{(z)}(r)
    e^{-isr}  .
\end{align}
In the last step, we  use the coordinate transformation $r\to z-\lambda_0-r$ and
  the notation
        \begin{align}
            \label{eq:Gz-def}
            G^{(z)}: \R \to \C , \qquad r \mapsto G^{(z)}(r):=G(z-\lambda_0-r) \qquad z\in \R.
        \end{align}
        Then, it follows from  \eqref{eq:G-compact}
        together with   \eqref{def:fourier-distri} that
         \begin{align}
\int \mathrm{d}s \, \Theta (s-q)  \int \mathrm{d} r  \,G^{(z)}(r)  e^{-isr} &=\int \mathrm{d}s \, \Theta (s)  \int \mathrm{d} r  \,G^{(z)}(r) e^{-iqr} e^{-isr}
\notag \\
&=\Theta(\mathfrak{F}[G^{(z),q}])=     \mathfrak{F}[     \Theta](      G^{(z),q}) ,
\end{align}
 where, for $q>0$, we define
 \begin{align}
G^{(z),q}(r):=G^{(z)}(r) e^{-iqr} .
 \end{align}
 Thanks to \eqref{eq:G-compact}, we have  for  $z\in\R$ and $q\geq 0$
        \begin{align}
            &G^{(z),q}\in C^\infty_c    (\R\setminus \{    z -  \lambda_0 \},\C) \subset \mathcal
            S(\R      ,\C)   .
        \end{align}
 It follows from Lemma \ref{lemma:heaviside} that for  $z\in\R$
   \begin{align}
\int \mathrm{d}s \, \Theta (s-q)  \int \mathrm{d} r  \,G^{(z)}(r)  e^{-isr}
&=\pi \delta(G^{(z),q}) -i \left(\text{PV}\left(1/\bullet \right)\right)(G^{(z),q})  .
\end{align}
This together with \eqref{eq.thl1-final1} yields that
\begin{align}
\label{eq:thl-1}
T_{\epsilon_n,R}^{(1),q,\infty}
&= T_{\epsilon_n,R}^{(1,1),q,\infty}+T_{\epsilon_n,R}^{(1,2),q,\infty},
\end{align}
where
\begin{align}
T_{\epsilon_n,R}^{(1,1),q,\infty}:&=-\pi \int_{\Gamma_-(\epsilon_n,R)}  \mathrm{d}z\,   \left\langle    \sigma_1 \Psi^{\overline \theta}_{\lambda_0},\left( H^\theta-z  \right)^{-1} \sigma_1 \Psi^\theta_{\lambda_0}\right\rangle  G(z-\lambda_0)
\\
T_{\epsilon_n,R}^{(1,2),q,\infty}:&=i \int_{\Gamma_-(\epsilon_n,R)}  \mathrm{d}z\,   \left\langle
\sigma_1 \Psi^{\overline \theta}_{\lambda_0},\left( H^\theta-z  \right)^{-1}
\sigma_1 \Psi^\theta_{\lambda_0}\right\rangle  \lim\limits_{\eta\to 0^+}
\int_{\R\setminus [-\eta,\eta]}\mathrm{d}r \, \frac{G(z-\lambda_0-r)e^{-iqr}}{r}
\label{eq:t12-lim}
\end{align}
In the following, we shall compute both contributions explicitly.

\paragraph{Contribution $T_{\epsilon_n,R}^{(1,1)}(h,l)$:}
It follows from \eqref{eq:G-compact} that there are numbers  $M>\kappa>0$ such that
$\text{supp }  G \subset [\kappa,M]$.   Recall that everything so far
holds for any choice of $n,R>0$  large enough.
For the rest of this proof we will restrict this choice to    $R > M $   and $n>0$ large enough such that
 $\epsilon_n<\kappa/4$.
In this setting, we may turn the $\mathrm dz$-integral in an indefinite one,
exploiting, the compact support of $G$ and the definition of the contour
$\Gamma_-(\epsilon_n,R)$. We thus obtain
\begin{align}
T_{\epsilon_n,R}^{(1,1),q,\infty}
&=-\pi \int_{\Gamma_-(\epsilon_n,R)}  \mathrm{d}z\,   \left\langle    \sigma_1 \Psi^{\overline \theta}_{\lambda_0},\left( H^\theta-z  \right)^{-1} \sigma_1 \Psi^\theta_{\lambda_0}\right\rangle  G(z-\lambda_0)
\notag \\
&=-\pi \int_{\Gamma_-(\epsilon_n,R)-\lambda_0}  \mathrm{d}z\,   \left\langle    \sigma_1 \Psi^{\overline \theta}_{\lambda_0},\left( H^\theta-\lambda_0-z  \right)^{-1} \sigma_1 \Psi^\theta_{\lambda_0}\right\rangle  G(z)
 \notag \\
&=-\pi \int^\infty_{0}  \mathrm{d}z\,   \left\langle    \sigma_1 \Psi^{\overline \theta}_{\lambda_0},\left( H^\theta-\lambda_0-z  \right)^{-1} \sigma_1 \Psi^\theta_{\lambda_0}\right\rangle  G(z)
\label{eq:t11-G}
\end{align}

\paragraph{Contribution $T_{\epsilon_n,R}^{(1,2)}(h,l)$:}
In order to calculate $T_{\epsilon_n,R}^{(1,2)}(h,l)$
we can now fall back to
Lemma \ref{lemma:thl12}.
 We recall  Definition \ref{Defen}
 and notice that $0<\epsilon_n <\kappa/4$ for sufficiently large $n$. Then,
as a direct consequence of Lemma \ref{lemma:thl12}, we find (for sufficiently large $R$)
\begin{align}
\lim_{n,R \to \infty }T_{\epsilon_n,R}^{(1,2),q,\infty}&= i  \lim\limits_{n,R\to\infty, \eta \to 0}T_{n,R}(\eta)
 \notag
 \\
 &= -\pi
\int_{\R}\mathrm{d}r \, G(r)e^{-iqr}\left\langle    \sigma_1 \Psi^{\overline
\theta}_{\lambda_0},\left( H^\theta-\lambda_0-r \right)^{-1} \sigma_1
\Psi^\theta_{\lambda_0}\right\rangle
\notag\\
&= -\pi
\int_0^\infty\mathrm{d}z \, \left\langle    \sigma_1 \Psi^{\overline
\theta}_{\lambda_0},\left( H^\theta-\lambda_0- z \right)^{-1} \sigma_1
\Psi^\theta_{\lambda_0}\right\rangle G(z)e^{-iqz},
\label{eq:t11-PV}
\end{align}
where $T_{n,R}(\eta)$ is defined in \eqref{eq:t_eps,R,eta}.

Collecting the contributions of
\eqref{eq:thl-1}, i.e,
\eqref{eq:t11-G} and \eqref{eq:t11-PV},
we establish the identity
\begin{align}
\label{eq:thl1-final}
T^{(1)}&=\lim\limits_{q\to 0^+}\lim_{n,R \to \infty} T_{\epsilon_n,R}^{(1),q,\infty}
 \\\notag
&=
-\pi \lim\limits_{q\to 0^+} \int^\infty_{0}  \mathrm{d}z\,   \left\langle    \sigma_1 \Psi^{\overline \theta}_{\lambda_0},\left( H^\theta-\lambda_0-z  \right)^{-1} \sigma_1 \Psi^\theta_{\lambda_0}\right\rangle  G(z)(1+e^{-iqz})
 \\\notag
&=
-2\pi \int^\infty_{0} \mathrm{d}z\,   \left\langle    \sigma_1 \Psi^{\overline \theta}_{\lambda_0},\left( H^\theta-\lambda_0-z  \right)^{-1} \sigma_1 \Psi^\theta_{\lambda_0}\right\rangle  G(z)
 \\\notag
&=-2\pi \int \mathrm{d}^3k \mathrm{d}^3k' \,  \overline{h(k)} l(k') f(k)f(k') \delta(|k|-|k'|)  \left\langle    \sigma_1 \Psi^{\overline \theta}_{\lambda_0},\left( H^\theta-\lambda_0-|k'| \right)^{-1} \sigma_1 \Psi^\theta_{\lambda_0}\right\rangle    .
\end{align}
In the third line we applied the dominated convergence theorem which is justified by \eqref{eq:G-compact}.
Moreover, we have inserted the definition of $G$  using the symbolic notation of
the Dirac-delta distribution in the last step.

\paragraph{Term $T^{(2)}$:}
The second term  $T^{(2)}$ can  be inferred
by repeating the calculation with $\theta$ replaced by $\overline\theta$ and reflecting the
path of integration $\Gamma (\epsilon_n,R)$ on the real axis when applying Lemma
\ref{laplace}. In this case one  has to consider  the Hamiltonian
$H^{\overline \theta}$ whose spectrum is  given by mirroring the spectrum
of $H^\theta$ at the real axis.
Due to the similarity of the calculation, we omit a proof but only state the
result
\begin{align}
\label{eq:proofmain-thl2final}
&T^{(2)}
=2\pi   \int \mathrm{d}^3k \mathrm{d}^3k' \,  \overline{h(k)} l(k') f(k)f(k') \delta(|k|-|k'|)  \left\langle    \sigma_1 \Psi^{ \theta}_{\lambda_0},\left( H^{\overline \theta}-\lambda_0+|k'| \right)^{-1} \sigma_1 \Psi^{\overline \theta}_{\lambda_0}\right\rangle .
\end{align}
The relative sign in comparison with \eqref{eq:thl1-final} is due to the the opposite mathematical orientation of the contour.
Inserting \eqref{eq:thl1-final} and
\eqref{eq:proofmain-thl2final} in \eqref{eq:thl0} completes the proof.
\end{proof}

\begin{appendix}
\section{Collection of previous results used in this work}
\label{app:previous}

In this section we collect the relevant results of \cite{bdh-scat} and \cite{bdh-res}
 which are used in the proofs contained in this work.

\subsection{Scattering Theory}
Let $\Psi\in \mathcal K \otimes D(H_f^{1/2})$ and $h,l\in\mathfrak h_0$.  Then, we recall from \cite[Lemma 4.1]{bdh-scat} that
        \begin{align}
            \label{a_-}
            a_-(h)\Psi=a(h)\Psi + ig\int^0_{-\infty} \mathrm{d}s\,
        e^{isH}
        \langle h_s,f\rangle_2\,
            \sigma_1
            e^{-isH} \Psi.
        \end{align}
        It can be shown by  integration by parts   that there is constant $C$ such that $| \langle h_s,f\rangle_2 |\leq C/(1+s^2)$ for $s\in\R$ (see \cite[Eq.\ (C.7)]{bdh-scat}). Hence, the integral  above is convergent.
Moreover, it is proven in \cite[Lemma 4.1 (iv)]{bdh-scat} that
\begin{align}
  a_\pm(h)\Psi_{\lambda_0}=0 . \label{gsanni}
\end{align}

\subsection{Spectral Properties}
We define
\begin{align}
\label{def:setS}
 \mathcal S:=\left\{\theta\in\C: -10^{-3} <   \Re \theta < 10^{-3} \text{ and }
\boldsymbol{\nu} < \Im \theta  < \pi/16 \right\} ,
\end{align}
 where $\boldsymbol \nu \in (0, \pi/16)$ is a fixed number (see \cite[Definition 1.4]{bdh-res}).

 In order to specify some of the spectral properties of $
 H^{\theta} $ we  define certain regions in the complex plane:
 \begin{definition}{(c.f.\ \cite[Definition 3.2]{bdh-scat})}
 \label{def:regionsAB}
 For fixed $\theta\in \mathcal S$, we set $\delta = e_1- e_0 = e_1$ and define the regions
 \begin{align}
 \label{region:A}
 A:&=
 A_1\cup A_2\cup A_3  ,
 \end{align}
 where
 \begin{align}
 A_1:&=\left\{  z\in\C : \Re z <e_0-\delta/2 \right\}
 \\
 A_2:&= \left\{  z\in\C : \Im z >\frac{1}{8}\delta \sin (\nu) \right\}
 \\
 A_3:&= \left\{  z\in\C : \Re z >e_1+\delta/2 , \Im z \geq -\sin (\nu/2) \left(\Re (z) -(e_1+\delta/2)   \right)\right\} ,
 \end{align}
 and for $i=0,1$, we define
 \begin{align}
 \label{region:Bi1}
 B_i^{(1)}:=\left\{  z\in\C : |\Re z-e_i| \leq \frac{1}{2}\delta, -\frac{1}{2}
 \rho_1 \sin(\nu)\leq \Im z \leq \frac{1}{8}\delta \sin (\nu)  \right\} .
 \end{align}
 These regions are depicted in Figure \ref{fig:regionsAB}.
 \end{definition}
 \begin{figure}[h]
 \centering
 \includegraphics[width=\textwidth]{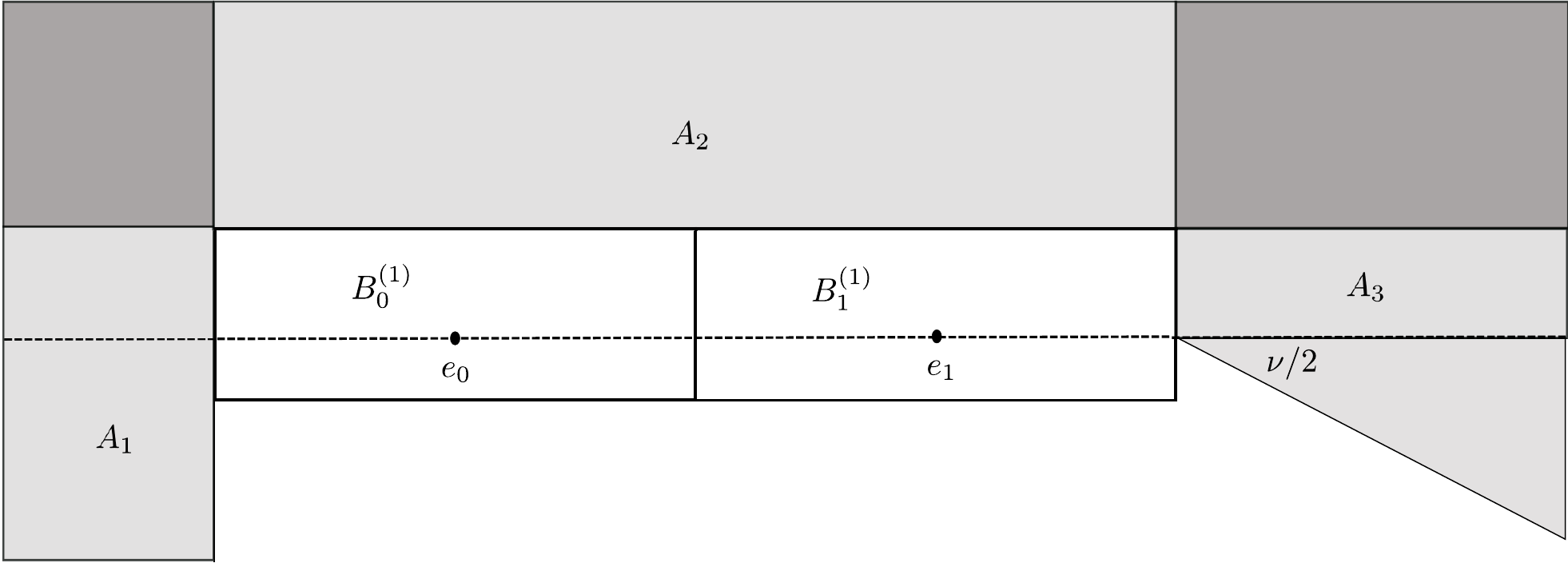}
   \caption{An illustration of the subsets of the complex plane introduced in
   Definition \ref{def:regionsAB}.}
     \label{fig:regionsAB}
 \end{figure}
 For a fixed  $ m \in \mathbb{N}, \:  m \geq 4,$ we define the cone
\begin{align}
\label{eq:defcone}
\mathcal C_m(z) :=\left\{  z+xe^{-i\alpha} : x\geq 0 ,
|\alpha-\nu |\leq  \nu/m \right\} .
\end{align}
 It follows from the induction scheme in \cite[Section 4]{bdh-res} that $ \lambda_i \in  B_i^{(1)} $, and moreover,
  \cite[Theorem 2.7]{bdh-res}  together with \cite[Lemma 3.13]{bdh-res}  yields
 \begin{align}\label{spectrum}
 \sigma(H^{\theta}) \subset \mathbb{C} \setminus  \Big [  A \cup \big ( B_0^{(1)} \setminus  \mathcal C_m(\lambda_0)   \big ) \cup  \big ( B_1^{(1)} \setminus  \mathcal C_m(\lambda_1)   \big )      \Big ].
 \end{align}
For $g$ small enough, we recall from \cite[Eq.\ (3.13)]{bdh-scat} that there is   constant  $\boldsymbol{c}>0$ such that
 \begin{align}
 \label{eq:impartres}
 \Im \lambda_1 <-g^2  \boldsymbol{c} <0 .
 \end{align}
In the following we collect some important resolvent estimates.
 The region $A$ is far away from the spectrum, and therefore,  resolvent estimates in this  region are easy. In  \cite[Lemma 3.2]{bdh-res},  we prove that there is a constant $C$ (that does not depend on $ n, g, \rho_0 $ and $\rho$) such that
\begin{align}\label{resA}
\Big \| \frac{1}{H^{\theta} - z}  \Big \| \leq C \frac{1}{ |z  - e_1|}, \qquad
\forall z \in A.
\end{align}
As in \cite[Eq.\ (3.31)]{bdh-scat}, we select  the auxiliary numbers $ \rho$
\begin{align}\label{dorm2}
\boldsymbol{C}^8 \rho_0^{\mu} \leq 1, \qquad \boldsymbol{C}^8 \rho^{\mu} \leq 1/4 ,  \qquad   (\text{and hence} \qquad  \boldsymbol{C} \rho^{ \frac{1}{2}  \iota (1 + \mu/4)} \leq 1),
\end{align}
where
\begin{align}\label{dorm222}
\iota = \frac{\mu/4 }{ ( 1+ \mu/4 ) } \in (0, 1).
\end{align}
In \cite[Lemma 4.7]{bdh-scat} we show that for all $n\in\N$,
a fixed (arbitrary)  $m\geq 4$
and $\theta\in\mathcal S$, there is a constant   $ C $
(that depends on $m$)
such that
\begin{align}\label{nn0}
\Big \| \frac{1}{H^{\theta}-z} \sigma_1
\Psi_{\lambda_0}^{\theta}    \Big \| \leq  C \boldsymbol{C}^{n+1} \frac{1}{\rho_n},
\end{align}
for every $ z \in  B_0^{(1)} \setminus  \mathcal{C}_{m}( \lambda_0 - 2 \rho_n^{1+ \mu/ 4} e^{-i \nu}) $, where the cone $ \mathcal{C}_{m}$ is defined in \eqref{eq:defcone}. It can be seen from  \cite[Lemma 4.7]{bdh-scat}  that $C$ does not depend on $n$, $\rho_0$ and $\rho$.
Here, we recall from \cite[Eq.\ (4.51)]{bdh-scat} that
\begin{align}\label{caritaputa}
\mathcal C_m ( \lambda_0 - 2 \rho_{n}^{1+ \mu/4}  e^{-i \nu } ) \cap
 \Big ( \overline{\mathbb{C}^+} + \lambda_0 - i 2 \sin(\nu) \rho_n^{1 + \mu /4 } \Big ) \subset    D(\lambda_0,   \epsilon_n )  \subset
 D(\lambda_0,  2 \epsilon_n ) \subset B_0^{(1)} .
\end{align}

\section{Proof of Lemma \ref{lemma:heaviside}}
\label{app:heaviside}

\begin{proof}[Proof of Lemma \ref{lemma:heaviside}]
For $\alpha>0$,  we define $g_\alpha \in \mathit S'(\R,\C)$ by
\begin{align}
g_\alpha : \mathit S(\R,\C) \to \C , \qquad
\varphi \mapsto g_\alpha (\varphi)=\int_0^\infty \mathrm{d}x\, e^{-\alpha x} \varphi(x) .
\end{align}
It follows from \eqref{def:fourier-distri} that for $\varphi\in \mathit S(\R,\C)$
\begin{align}
\label{E2}
\mathfrak F[g_\alpha] (\varphi)=g_\alpha\left( \mathfrak F[\varphi]\right)= \int^\infty_0 \mathrm{d}x\, e^{-\alpha x} \mathfrak F[\varphi](x)
=\int^\infty_0 \mathrm{d}x\, e^{-\alpha x} \int_\R \mathrm{d}s \,  \varphi(s)e^{-isx}.
\end{align}
The  integrand on the right-hand side of \eqref{E2} is absolutely integrable because of $\varphi\in \mathit S(\R,\C)$, and hence,  the Fubini-Tonelli theorem yields that
\begin{align}
\mathfrak F[g_\alpha] (\varphi)
= \int_\R \mathrm{d}s \,  \varphi(s) \int^\infty_0 \mathrm{d}x\, e^{-x(\alpha+is)}.
\end{align}
This together with
\begin{align}
\int^\infty_0 \mathrm{d}x\, e^{-x(\alpha+is)} = \frac{1}{\alpha +is}= \frac{\alpha}{(\alpha^2+s^2)}-i \frac{s}{(\alpha^2+s^2)}
\end{align}
implies that
\begin{align}
\label{eq:lim10001}
\mathfrak{F}[g_\alpha](\varphi)&=G^{(1)}_\alpha(\varphi)-iG^{(2)}_\alpha(\varphi) ,
\end{align}
where
\begin{align}
G^{(1)}_\alpha(\varphi)
&=\int_\R \mathrm{d}s\,\frac{\alpha}{(\alpha^2+s^2)}\varphi(s)
\end{align}
and
\begin{align}
G^{(2)}_\alpha(\varphi)
&=\int_\R \mathrm{d}s\,\frac{s}{(\alpha^2+s^2)}\varphi(s) .
\end{align}
Using the coordinate transformation $s\to \alpha s$ we obtain that
\begin{align}
\label{eq:heavi-lim1}
\lim\limits_{\alpha\to 0^+} G^{(1)}_\alpha(\varphi)
= \lim\limits_{\alpha\to 0^+} \int_\R  \mathrm{d}s\, \frac{\varphi(\alpha s) }{1+s^2}
=\varphi(0)\int_\R  \mathrm{d}s\, \frac{1}{1+s^2}=\pi \varphi(0)=\pi \delta (\varphi) ,
\end{align}
where the second step follows from the dominated convergence theorem together with the continuity of $\varphi$.
Moreover, we have
\begin{align}
\label{eq:heavi-lim20}
G^{(2)}_\alpha(\varphi)
=G^{(2,1)}_\alpha(\varphi)+G^{(2,2)}_\alpha(\varphi) ,
\end{align}
where
\begin{align}
G^{(2,1)}_\alpha(\varphi):=\int_{\R\setminus [-\alpha^8,\alpha^8]} \mathrm{d}s\,\frac{s}{(\alpha^2+s^2)}\varphi(s)
\end{align}
and
\begin{align}
G^{(2,2)}_\alpha(\varphi):=\int^{\alpha^8}_{-\alpha^8} \mathrm{d}s\,\frac{s}{(\alpha^2+s^2)}\varphi(s).
\end{align}
We treat these two terms separately. At first, we obtain
\begin{align}
\left| G^{(2,2)}_\alpha(\varphi) \right|
&\leq \int^{\alpha^8}_{-\alpha^8} \mathrm{d}s\,\left| \frac{s}{(\alpha^2+s^2)}\left( \varphi(s)-\varphi(0)\right)\right|
+|\varphi(0)| \left| \int^{\alpha^8}_{-\alpha^8} \mathrm{d}s\, \frac{s}{(\alpha^2+s^2)}\right|
\notag \\
&\leq 2\alpha^{14}  \sup_{s\in [-\alpha^8,\alpha^8]}\left| \varphi(s)-\varphi(0)\right|
+\frac{|\varphi(0)|}{2}\left| \int^{\alpha^{16}}_{-\alpha^{16}} \mathrm{d}s\, \frac{1}{\alpha^2+s}\right|
\end{align}
where we have used the coordinate transformation $s'=s^2$ for the second term in the last line. Then, we obtain
\begin{align}
\left| G^{(2,2)}_\alpha(\varphi) \right|
&\leq  2\alpha^{14}  \sup_{s\in [-\alpha^8,\alpha^8]}\left| \varphi(s)-\varphi(0)\right|
+\frac{\varphi(0)}{2}\left| \ln (1+\alpha^8)-\ln (1-\alpha^8)\right| .
\end{align}
Note that $\ln(\cdot)$ is continuous close to $1$ and $\sup_{s\in [-\alpha^8,\alpha^8]}\left| \varphi(s)-\varphi(0)\right|<\infty$ since a continuous function has a maximum on a compact set.  We conclude
\begin{align}
\label{eq:heavi-lim22}
\lim\limits_{\alpha\to 0^+} G^{(2,2)}_\alpha(\varphi)
=0.
\end{align}
Finally, for some $R>0$, we obtain
\begin{align}
G^{(2,1)}_\alpha(\varphi)
&= \int_{[-R,R]\setminus [-\alpha^8,\alpha^8]} \mathrm{d}s\,\frac{s}{(\alpha^2+s^2)}\left( \varphi(s) -\varphi (0)\right)
+\int_{[-R,R]\setminus [-\alpha^8,\alpha^8]} \mathrm{d}s\,\frac{s}{(\alpha^2+s^2)}\varphi (0)
\notag \\
&+\int_{\R\setminus [-R,R]} \mathrm{d}s\,\frac{s}{(\alpha^2+s^2)}\varphi (s) .
\end{align}
Due to symmetry, the second term vanishes independently of $R$, and moreover,
 the mean value theorem implies that
\begin{align}
|\varphi(s) -\varphi (0)| \leq |s| \norm{\varphi'}_\infty .
\end{align}
Altogether, this yields that
\begin{align}
\Big | \frac{s}{(\alpha^2+s^2)}\left( \varphi(s) -\varphi (0)\right)  \chi_{   [-R,R]\setminus [-\alpha^8,\alpha^8]     }(s) \Big |
\leq  &   \norm{\varphi'}_\infty \chi_{   [-R,R]   }(s),   \\
\Big | \frac{s}{(\alpha^2+s^2)} \varphi (s)  \chi_{ \R \setminus   [-R,R]     }(s) \Big |
\leq  &     \Big |   \frac{\phi(s)}{s}     \chi_{ \R \setminus   [-R,R]     }(s)   \Big |,
\end{align}
where $\chi_A$ is the characteristic (indicator) function of the set $A$.
This allows us to apply the dominated convergence theorem in order to find
\begin{align}
\lim\limits_{\alpha\to 0^+}  G^{(2,1)}_\alpha(\varphi)
&= \text{PV} \int_{\R} \mathrm{d}s\,\frac{1}{s} \varphi(s)=\left( \text{PV}\left(1/ \bullet \right)\right)(\varphi) .
\end{align}
This together with \eqref{eq:heavi-lim22},  \eqref{eq:heavi-lim20}, \eqref{eq:heavi-lim1} and \eqref{eq:lim10001} implies that
\begin{align}
\label{eq:distr1}
\lim\limits_{\alpha\to 0^+} \mathfrak{F}[g_\alpha](\varphi)&=\pi\delta
(\varphi)-i\left(\text{PV}\left(1/\bullet \right)\right)(\varphi)
\qquad \forall\varphi\in \mathit S(\R,\C) .
\end{align}
We conclude the proof by \eqref{def:fourier-distri} which yields
\begin{align}
\label{eq:heavi-lim0}
\lim\limits_{\alpha\to 0^+}\mathfrak{F}[g_\alpha](\varphi)
=\lim\limits_{\alpha\to 0^+}g_\alpha(\mathfrak{F}[\varphi])
=\Theta(\mathfrak{F}[\varphi])
=\mathfrak{F}[\Theta](\varphi)  \qquad \forall\varphi\in \mathit S(\R,\C) .
\end{align}
\end{proof}
\end{appendix}

\section*{Acknowledgement}
D.\ -A.\ Deckert and F.\ H\"anle would like to thank the IIMAS at UNAM and M.\
Ballesteros  the Mathematisches Institut at LMU Munich   for their hospitality. This
project was partially funded by the DFG Grant DE 1474/3-1,  the grants PAPIIT-DGAPA
UNAM  IN108818, SEP-CONACYT 254062, and the junior research group ``Interaction
between Light and Matter'' of the Elite Network Bavaria. M.\  B.\  is a
Fellow of the Sistema Nacional de Investigadores (SNI).
F.\ H.\ gratefully acknowledges financial support by the ``Studienstiftung des deutschen Volkes''.
Moreover, the authors express their gratitude for the fruitful discussions with
V.\ Bach, J.\ Faupin, J.\ S.\ M\o ller, A.\ Pizzo and W.\ De Roeck, R. Weder and P. Barberis.

\bibliographystyle{amsplain}
\bibliography{ref}
\end{document}